\def\final{0}
\def\com{1}
\def\alt{0}
\def\cm{0}
\newcommand{\mynote}[1]{\marginpar{\tiny\sf #1}}
\newcommand{\mynote}[1]{}
\newcommand{\snote}[1]{\mynote{\color{blue}Steven: {#1}}}
\newcommand{\rynote}[1]{\mynote{\color{red}Ryan: {#1}}}
\newcommand{\sj}[1]{\mynote{\color{magenta}Shahin: {#1}}}
\newcommand{\ar}[1]{\mynote{\color{brown}Aaron: {#1}}}
\newcommand{\Prob}[2]{\underset{#1}{\mathbb{P}}\left[ #2 \right]}
\newcommand{\bbe}{\mathbf{e}}
\newcommand{\bp}{\mathbf{p}}
\newcommand{\bbw}{\mathbf{w}}
\newcommand{\bby}{\mathbf{y}}
\newcommand{\bbx}{\mathbf{x}}
\newcommand{\bbz}{\mathbf{z}}
\newcommand{\bbb}{\mathbf{b}}
\newcommand{\bbc}{\mathbf{c}}
\newcommand{\bbv}{\mathbf{v}}
\newcommand{\uttb}{^{(t-1)}}
\newcommand{\utt}{^{(t)}}
\newcommand{\uttf}{^{(t+1)}}
\newcommand{\learnedge}{\texttt{LearnEdge}}
\newcommand{\learnhull}{\texttt{LearnHull}}
\newcommand{\adversary}{\texttt{ADVERSARY}}
\newcommand{\conmatr}{\texttt{MATRIX}}
\newcommand{\learnell}{\texttt{LearnEllipsoid}}
\newcommand{\nac}{\texttt{NAC}}
\newcommand{\ad}{\texttt{AD-2}}
\DeclareMathOperator*{\myargmin}{\mathrm{argmin}}
\DeclareMathOperator*{\myargmax}{\mathrm{argmax}}
\algnewcommand\algorithmicinput{\textbf{Input:}}
 \algnewcommand\INPUT{\item[\algorithmicinput]}
 \algnewcommand\algorithmicoutput{\textbf{Output:}}
 \algnewcommand\OUTPUT{\item[\algorithmicoutput]}
\newcommand\R{\mathbb{R}}
\newcommand{\ellip}{\texttt{ELLIPSOID}}
\newcommand{\cent}{\texttt{CENTROID}}
\newcommand{\cA}{\mathcal{A}}
\newcommand{\cC}{\mathcal{C}}
\newcommand{\cD}{\mathcal{D}}
\newcommand{\cE}{\mathcal{E}}
\newcommand{\cF}{\mathcal{F}}
\newcommand{\cH}{\mathcal{H}}
\newcommand{\cI}{\mathcal{I}}
\newcommand{\cK}{\mathcal{K}}
\newcommand{\cL}{\mathcal{L}}
\newcommand{\cN}{\mathcal{N}}
\newcommand{\cP}{\mathcal{P}}
\newcommand{\tN}{\widetilde{\cN}}
\newcommand{\poly}{\mathrm{poly}}
\newcommand{\argmin}{\arg\!\min}
\newcommand{\argmax}{\arg\!\max}
\newtheorem{theorem}{Theorem}[]
\newtheorem{lemma}[theorem]{Lemma}
\newtheorem{corollary}{Corollary}
\newtheorem{assumpt}{Assumption}
\theoremstyle{definition}
\newtheorem{definition}{Definition}
\title{Learning from Rational Behavior: \\Predicting Solutions to Unknown Linear Programs}
\author{
  Shahin Jabbari, Ryan Rogers, Aaron Roth, Zhiwei Steven Wu\\
  University of Pennsylvania\\
  \texttt{\{jabbari@cis, ryrogers@sas, aaroth@cis, wuzhiwei@cis\}.upenn.edu} \\
}
\begin{document}

\maketitle

\begin{abstract}
We define and study the problem of predicting the solution to a linear
program (LP) given only partial information about its objective and
constraints. This generalizes the problem of learning to predict the
purchasing behavior of a rational agent who has an unknown objective
function, that has been studied under the name ``Learning from
Revealed Preferences". We give mistake bound learning algorithms in
two settings: in the first, the \emph{objective} of the LP
is known to the learner but there is an arbitrary, fixed set of
constraints which are unknown. Each example is
defined by an additional known constraint and the goal of the
learner is to predict the optimal solution of the LP given
the union of the known and unknown constraints. This models 
the problem of predicting the behavior of a rational
agent whose \emph{goals} are known, but whose resources are
unknown. In the second setting, the objective of the LP is
unknown, and changing in a controlled way. The \emph{constraints} of
the LP may also change every day, but are known. An
example is given by a set of constraints and partial information about
the objective, and the task of the learner is again to predict the
optimal solution of the partially known LP.
\end{abstract}

\section{Introduction}
\label{sec:intro}

We initiate the systematic study of a general class of 
multi-dimensional prediction problems,
 where the learner wishes to predict the solution to an
unknown linear program (LP), given some partial information
about either the set of constraints or the objective.  In the special
case in which there is a single known \emph{constraint} that is
changing and the objective that is unknown and
fixed, this problem has been studied under the name \emph{learning
  from revealed preferences} \cite{ACDKR15,BDMUV15,BV06,ZR12} and
captures the following scenario: a buyer, with an unknown linear
utility function over $d$ goods $u:\mathbb{R}^d\rightarrow \mathbb{R}$
defined as $u(\bbx) = \bbc\cdot \bbx$ faces a purchasing decision
every day. On day $t$, she observes a set of prices $\bp^t \in
\mathbb{R}^d_{\geq 0}$ and buys the bundle of goods that maximizes
her unknown utility, subject to a \emph{budget} $b$:
$$\bbx^{(t)} = \argmax_{\bbx} \bbc \cdot \bbx\quad\quad\quad\textrm{such that } \bp^t \cdot \bbx \leq b$$
In this problem, the goal of the learner is to predict the bundle that
the buyer will buy, given the prices that she faces. Each example at day $t$
is specified by the vector $\bp^t \in \mathbb{R}^d_{\geq 0}$ (which
fixes the constraint), and the goal is to accurately predict the
purchased bundle $\bbx^{(t)} \in [0,1]^d$ that is the result of
optimizing the unknown linear objective. 

It is also natural to consider the class of problems in which the
goal is to predict the outcome to a LP
broadly e.g. suppose the objective $\bbc\cdot \bbx $ is
known but there is an \emph{unknown} set of constraints $A \bbx \leq
\bbb$. An \emph{instance} is again specified by a changing known
constraint $(\bp^t, b^t)$ and the goal is to predict:
\begin{equation}
\bbx^{(t)} = \argmax_{\bbx} \bbc\cdot \bbx \quad\quad\quad\textrm{such that } A \bbx \leq \bbb\ \ \ \textrm{and } \bp^t\cdot
\bbx \leq b^t.
\label{eq:knownobj_LP}
\end{equation}

  This models the problem of predicting the behavior of an agent whose
  \emph{goals} are known, but whose resource constraints are
  unknown.

Another natural generalization is the problem in which the objective is unknown,
and may vary in a specified way across examples, and in which there
may also be multiple arbitrary known constraints which vary across
examples. Specifically, suppose that there are $n$ distinct, unknown
linear objective functions $\bbv^1,\ldots,\bbv^n$. An \emph{instance} on day $t$ 
is specified by a subset of the unknown objective functions, $S^t
\subseteq [n]:= \{1,\ldots,n\}$ and a convex feasible region $\cP^t$, and
the goal is to predict:
\begin{equation}
\bbx^{(t)} = \argmax_{\bbx} \sum_{i \in S^t} \bbv^i \cdot \bbx \quad\quad\quad\textrm{such that } \bbx \in \cP^t.
\label{eq:knownconstr_LP}
\end{equation}

When the changing feasible regions $\cP^t$ correspond simply to
varying prices as in the revealed preferences problem, this models a
setting in which at different times, purchasing decisions are made by
different members of an organization, with heterogeneous preferences
--- but are still bound by an organization-wide budget.  The learner's
problem is, given the subset of decision makers and the
prices at day $t$, to predict which bundle they will purchase. This
generalizes some of the preference learning problems recently studied
by Blum et al~\cite{BMM15}. Of course, in this
generality, we may also consider a richer set of changing constraints
which represent things beyond prices and budgets.

In all of the settings we study, the problem can be viewed as the task
of predicting the behavior of a rational decision maker, who always
chooses the action that maximizes her objective function subject to a
set of constraints. Some part of her optimization problem is unknown,
and the goal is to learn, through observing her behavior, that unknown
part of her optimization problem sufficiently so that we may reliably
predict her future actions.

\subsection{Our Results}
We study both variants of the problem (see below)
in the strong \emph{mistake
  bound} model of learning \cite{Littlestone87}. In this model, the
learner encounters an arbitrary adversarially chosen sequence of
examples online and must make a prediction for the optimal solution
in each example before seeing future examples. Whenever the
learner's prediction is incorrect, the learner encounters a
\emph{mistake}, and the goal is to prove an
upper bound on the number of mistakes the learner can make, in
the worst case over the sequence of examples. Mistake bound
learnability is stronger than (and implies) PAC learnability~\cite{Valiant84}.

\paragraph{Known Objective and Unknown Constraints}
 We first study this problem under the assumption that there is a
 uniform upper bound on the number of bits of precision used to
 specify the constraint defining each example. In this case, we show
 that there is a learning algorithm with both running time and mistake
 bound linear in the number of edges of the polytope
 \rynote{Also say linear in precision of input?}\sj{agree with
   precision.}formed by the
 unknown constraint matrix $A \bbx \leq b$. We note that this is
 always polynomial in the dimension $d$ when the number of unknown
 constraints is at most $d + O(1)$. 
 (In \ifnum\cm=1 the supplementary material\else Appendix~\ref{sec:poly-mistake}\fi, we show that by allowing the learner to
run in time exponential in $d$, we can give a mistake bound that is always linear in the dimension and
the number of rows of $A$, but we leave as an open question whether or not this mistake bound can be
achieved by an efficient algorithm.)
  We then show \ifnum\final=1 in the full version \fi that our bounded precision assumption is necessary ---
 i.e. we show that when the precision to which constraints are
 specified need not be uniformly upper bounded, then no algorithm for
 this problem in dimension $d \geq 3$ can have a finite mistake bound.

\ifnum\final=0
 This lower bound motivates us to study a PAC style variant of the
 problem, where the examples are not chosen in an adversarial manner,
 but instead are drawn independently at random from an arbitrary
 unknown distribution. In this setting, we show that even if the
 constraints can be specified to arbitrary (even infinite) precision,
 there is a learner that requires sample complexity only
 linear in the number of edges of the unknown constraint polytope. This learner
 can be implemented efficiently when the constraints are specified with finite precision.

\fi
\paragraph{Known Constraints and Unknown Objective}
For the variant of the problem in which the objective is unknown and
changing and the constraints are known but changing, we give an
algorithm that has a mistake bound and running time 
polynomial in the dimension $d$. Our algorithm uses the Ellipsoid
algorithm to learn the coefficients of the unknown
objective by implementing a separation oracle that generates
separating hyperplanes given examples on which our algorithm made
a mistake.

We leave the study of either of our problems under natural relaxations (e.g. under a less demanding loss function)
and whether it is possible to substantially improve our results in these relaxations
as an interesting open problem.

\subsection{Related Work}
Beigman and Vohra~\cite{BV06} were the first to study \emph{revealed
preference problems} (RPP) as a learning problems and to relate them to
multi-dimensional classification. They 
derived sample complexity bounds for such problems
by computing the fat shattering dimension of the class of
target utility functions, and showed that the set of
Lipschitz-continuous valuation functions had finite fat-shattering
dimension. 
Zadimoghaddam and Roth~\cite{ZR12} gave efficient
algorithms with polynomial sample complexity for PAC learning of the
RPP over the class of linear (and piecewise
linear) utility functions. Balcan et al.~\cite{BDMUV15} showed a
connection between RPP and the
structured prediction problem of learning $d$-dimensional linear
classes \cite{Col00,Col02,LMP01}, and use an efficient variant of the
compression techniques given 
by Daniely and Shalev-Shwartz \cite{DS14}
to give efficient PAC algorithms with optimal sample complexity for
various classes of economically meaningful utility functions. 
Amin et al.~\cite{ACDKR15} 
study the RPP for linear
valuation functions in the mistake bound model, and in the
query model in which the learner gets to set prices and wishes to
maximize profit. 
Roth et al.~\cite{RUW15} also study the query model
of learning and give results for strongly concave objective
functions, leveraging an algorithm of 
Belloni et al.~\cite{BLNR15} for bandit convex optimization with adversarial
noise.

All of the works above focus on the setting of predicting the
optimizer of a fixed unknown objective function, together with a
single known, changing constraint representing prices. This is the
primary point of departure for our work --- we give algorithms for the
more general settings of predicting the optimizer of a LP
when there may be many unknown constraints, or when the unknown
objective function is changing. Finally, the literature on
\emph{preference learning} (see e.g. \cite{preflearning}) has similar
goals, but is technically quite distinct: the canonical problem in
preference learning is to learn a \emph{ranking} on distinct
elements. In contrast, the problem we consider here is to predict the
outcome of a continuous optimization problem as a function of varying
constraints.  \snote{think for 3 authors, we should also say et al.}

\section{Model and Preliminaries}
\label{sec:model}

We first formally define the geometric notions used throughout this paper.~A \emph{hyperplane} and a \emph{halfspace} in $\R^d$ are the set of
points satisfying the linear equation $a_1 x_1 + \ldots a_d x_d = b$
and the linear inequality $a_1 x_1 + \ldots + a_dx_d \leq b$ for a set of $a_i$s
respectively, assuming that not all $a_i$'s are simultaneously zero.
A set of \emph{hyperplanes} are \emph{linearly independent} if the normal vectors to the hyperplanes
are linearly independent.
A \emph{polytope} (denoted by $\cP \subseteq \R^d$) is the \emph{bounded} intersection of finitely many
halfspaces, written as $\cP = \{\bbx\mid A\bbx\leq \bbb\}$.  An
\emph{edge-space} $e$ of a polytope $\cP$ is a one dimensional
subspace that is the intersection of $d-1$ linearly independent
hyperplanes of $\cP$, and an \emph{edge} is the intersection between an
edge-space $e$ and the polytope $\cP$.
  We denote the set of edges of polytope $\cP$ by $E_\cP$.
A \emph{vertex} of $\cP$ is a point where
$d$ linearly independent hyperplanes of $\cP$ intersect.  
Equivalently, $\cP$ can be written as the \emph{convex hull}
of its vertices $V$ denoted by $\mathrm{Conv}(V)$.  Finally, we define a set of points to be
\emph{collinear} if there exists a line that contains all the points
in the set.

We study an online prediction problem with the goal of 
predicting the optimal solution of a changing LP whose
parameters are only partially known.  Formally, in each day
$t=1,2, \ldots$ an adversary chooses a LP specified by a polytope $\cP\utt$ 
(a set of linear inequalities) and coefficients $\bbc\utt \in \R^{d}$ of the linear
objective function.  The learner's goal is to predict the solution
$\bbx\utt$ where
$\bbx\utt = \argmax_{\bbx \in \cP\utt } \bbc\utt\cdot\bbx.$
After making the prediction $\hat \bbx\utt$, the learner observes the
optimal $\bbx\utt$ and learns whether she has made a mistake
$(\hat\bbx\utt\neq \bbx\utt)$.
The mistake bound is defined as follows.
\sj{we never use the definition of $\sigma$ anywhere else in the paper. So either state at the beginning of each section
what $\sigma$ is or drop this notation. I vote for the latter!}
\begin{definition}
Given a LP with feasible polytope $\cP$ and objective function
$\bbc$, let $\sigma\utt$ denote the parameters of the LP that
are revealed to the learner on day $t$.  
A learning algorithm $\cA$ takes as input the sequence $\{\sigma\utt\}_t$, the known
parameters of an adaptively chosen sequence $\{(\cP\utt,\bbc\utt) \}_t$ of LPs and
outputs a sequence of predictions $\{\hat\bbx\utt\}_t$.  We say that
$\cA$ has mistake bound $M$ if
 $\max_{ \{(\cP\utt,
  \bbc\utt)\}_t } \left\{ \Sigma_{t=1}^\infty \mathbf{1}\left[\hat \bbx\utt \neq
  \bbx\utt\right] \right\}\leq M,
$
\ifnum\final=0 where $\bbx\utt = \argmax_{\bbx \in \cP\utt } \bbc\utt\cdot\bbx$ on day $t$.\fi
\end{definition}

\ifnum\final=0
We consider two different instances of the problem described above.
First, in Section \ref{sec:known_obj}, we study the problem given in
\eqref{eq:knownobj_LP} in which $\bbc\utt = \bbc$ is fixed and known
to the learner but the polytope $\cP\utt = \cP \cap \cN\utt$ consists
of an unknown fixed polytope $\cP$ and a new constraint $\cN\utt =
\{\bbx \mid \bp\utt\cdot\bbx\leq b\utt \}$ which is revealed to the
learner on day $t$ i.e. $\sigma\utt = (\cN\utt , \bbc)$. We refer to this as the \emph{Known Objective}
problem.
Then, in Section \ref{sec:known_constr}, we study the
problem in which the polytope $\cP\utt$ is changing and known but the
objective function $\bbc\utt = \sum_{i \in S\utt} \bbv^i $ is unknown and changing as in
\eqref{eq:knownconstr_LP} where the set $S\utt$ is known i.e.  $\sigma\utt = (\cP\utt , S\utt)$. We refer to this as the \emph{Known
  Constraints} problem. 
\fi

\ifnum\final=0
In order for our prediction problem to be well defined, we make 
Assumption~\ref{assumpt:vertex} about the observed solution $\bbx\utt$ in each day.
\else
Throughout the paper, we assume the uniqueness of $\bbx\utt$ in order for the prediction problem to be well defined.
\fi
\ifnum\final=0
Assumption~\ref{assumpt:vertex} guarantees that each solution is on a
vertex of $\cP\utt$. 
\fi
\begin{assumpt}
The optimal solution to the LP: $\max_{\bbx\in \cP\utt} \bbc\utt\cdot\bbx$ is unique for all $t$.
\label{assumpt:vertex}
\end{assumpt}

\section{The Known Objective Problem}
\label{sec:known_obj}

In this section, we focus on the \emph{Known Objective Problem} where
the coefficients of the objective function $\bbc$ are fixed and known
to the learner but the feasible region $\cP\utt$ 
on day $t$ is unknown and changing. In particular, $\cP\utt$
is the intersection of a fixed and unknown polytope
$\cP = \{\bbx\mid A\bbx\leq \bbb, A\subseteq \R^{m\times d}\}$ and a known halfspace $\cN\utt =
\{\bbx \mid \bp\utt \cdot \bbx\leq b\utt\}$
i.e. $\cP\utt =
\cP\cap \cN\utt$. 


Throughout this section we make the following assumptions. First, we 
assume w.l.o.g. (up to scaling) that
the points in $\cP$ have $\ell_\infty$-norm
bounded by 1.  
\begin{assumpt}
The unknown polytope $\cP$ lies inside the unit $\ell_\infty$-ball i.e. $\cP \subseteq
\{\bbx \mid ||\bbx||_\infty \leq 1 \}$.
\label{assumpt:poly}
\end{assumpt}
\sj{do we actually use the name of this assumptions other than solution uniqueness? If not, we probably should remove the names!}
We also assume that the
coordinates of the vertices in $\cP$ can be written
with finite precision (this is implied if the halfspaces defining $\cP$ can be described with finite precision). \footnote{Lemma 6.2.4
  from Grotschel et al.~\cite{GLS93} states that if each constraint in $\cP\subseteq\R^d$ has
  encoding length at most $N$ then each vertex of $\cP$ has
  encoding length at most $4d^2N$.  Typically 
  the  finite
  precision assumption is made on the constraints of the LP. However, since this assumption implies 
  that the vertices can be described with finite
  precision, for simplicity, we make our assumption directly on the
  vertices.}
  \sj{I wrote this footnote differently and now it is changed. I would change ``However, since this assumption ... " to ``However, Lemma 6.2.4 from~\cite{GLS93}
  has a converse. Since it is more convenient for us to assume precision on the vertices of the polytope, we use this assumption throughout the paper without loss
  of generality." Thoughts?}

\begin{assumpt}
The coordinates of each vertex of $\cP$ can be written with $N$ bits of precision.
\label{assumpt:precision}
\end{assumpt}
\ifnum\final=0
We show in Section \ref{sec:lowerbound} that Assumption
\ref{assumpt:precision} is necessary --- without any upper
bound on precision, there is no algorithm with a finite mistake bound.
\fi
Next, we make some non-degeneracy assumptions on
polytopes $\cP$ and $\cP\utt$, respectively. We require these
assumptions to hold on each day.
\begin{assumpt}
Any subset of $d-1$ rows of $A$ have rank $d-1$
where $A$ is the constraint matrix in $\cP = \{\bbx \mid A\bbx \leq \bbb \}$.
\label{assumpt:rank}
\end{assumpt}
\begin{assumpt}
Each vertex of $\cP\utt$ is the intersection of exactly
$d$-hyperplanes of $\cP\utt$.
\label{assumpt:bind}
\end{assumpt}

\ifnum\final=0
The rest of this section is organized as follows. 
We present
$\learnedge$ for the Known Objective Problem
and analyze its mistake bound in Sections~\ref{sec:learn-edge}~and~\ref{sec:edge_analysis}, respectively.   Then in
Section~\ref{sec:lowerbound}, we prove the necessity of Assumption \ref{assumpt:precision} to get a finite mistake bound. 
Finally in Section
\ref{sec:stochastic}, we present the $\learnhull$ in a
PAC style setting where the new constraint each day is drawn
i.i.d. from an unknown distribution, rather than selected
adversarially.  
 \fi

\subsection{$\learnedge$ Algorithm}
\label{sec:learn-edge}
In this section we introduce $\learnedge$ and  show in Theorem~\ref{thm:learn-edge}
that the number of mistakes of $\learnedge$ depends linearly on the number of edges $E_\cP$ and the
precision parameter $N$ and only logarithmically on the dimension $d$.
We defer \ifnum\cm=1 all the missing proofs to the supplementary material.
\else all the proofs of this section to Appendix~\ref{sec:missingproof3}.\fi

\begin{theorem}
  The number of mistakes and per day running time of $\learnedge$ in the Known Objective Problem are
$O(|E_\cP| N\log(d))$ and $\poly(m,d, |E_\cP|)$ respectively when $A\subseteq \R^{m\times d}.$
  \label{thm:learn-edge}
\end{theorem}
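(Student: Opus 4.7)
My proof plan rests on a structural observation about the observed optima. Under the non-degeneracy Assumption~\ref{assumpt:bind}, each $\bbx\utt$ is a vertex of $\cP\utt=\cP\cap\cN\utt$ and hence the intersection of exactly $d$ tight hyperplanes. Either all $d$ of these come from the rows of the unknown matrix $A$, in which case $\bbx\utt$ is a vertex of $\cP$ and the new constraint $\cN\utt$ is slack at $\bbx\utt$; or exactly $d-1$ come from $A$ and the remaining tight hyperplane is the boundary of $\cN\utt$, in which case $\bbx\utt$ lies on a one dimensional edge-space $e\in E_\cP$. So every mistake reveals a point that is either a vertex of $\cP$ or a point on some edge of $\cP$, and this is the only structure that $\learnedge$ needs to learn in order to predict correctly.

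The algorithm I have in mind maintains a collection $\cE$ of learned edge-spaces of $\cP$ (each recorded as two distinct points known to lie on it, which determine the 1D line), together with a sub-interval of each edge-space certified to lie in $\cP$, and a set $V$ of vertices of $\cP$ that have already been observed. Given the known constraint $\cN\utt$, the prediction is formed by enumerating a finite candidate set: every $\bbv\in V$ with $\bp\utt\cdot\bbv\leq b\utt$, together with every intersection of a learned edge-space with the hyperplane $\{\bbx:\bp\utt\cdot\bbx=b\utt\}$ that falls inside the corresponding learned interval. The prediction $\hat\bbx\utt$ is then the candidate maximizing $\bbc\cdot\bbx$, with a default guess used if the candidate set is empty. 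Every candidate is certifiably in $\cP\cap\cN\utt$, so $\bbc\cdot\hat\bbx\utt\le \bbc\cdot\bbx\utt$, and a mistake can only occur when the true optimum lies strictly outside our learned certificate.

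The mistake analysis is a progress argument. On any mistake, inspecting the observed $\bbx\utt$ against the current state of $\cE$ and $V$ reveals one of three events: (i) $\bbx\utt$ is a new vertex of $\cP$ (appended to $V$), (ii) $\bbx\utt$ lies on an edge-space not already in $\cE$ (and once a second such point arrives the 1D line becomes fully determined), or (iii) $\bbx\utt$ lies on an edge-space already in $\cE$ but strictly outside its current certified interval, in which case the interval is extended. Events (i) and (ii) can collectively occur only $O(|E_\cP|)$ times, because $\cP$ has only $O(|E_\cP|)$ vertices and edges. For (iii) I appeal to Assumption~\ref{assumpt:precision}: the two endpoints of any edge have $N$-bit coordinates, so the position of a point along a learned edge-space can be encoded in $O(N+\log d)$ bits, and a careful extension rule (effectively a binary search on the parametrization of the edge) limits the number of interval-extending mistakes to $O(N\log d)$ per edge. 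Summing gives the mistake bound $O(|E_\cP|\cdot N\log d)$. The per-day running time is $\poly(m,d,|E_\cP|)$, since the candidate set has size $O(|E_\cP|)$ and each candidate is produced by solving a $d\times d$ linear system.

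The step I expect to be the main obstacle is the precise $O(N\log d)$ bound for event (iii): it is plausible but not immediate that each interval-extending mistake shrinks the remaining uncertainty geometrically, and one has to argue that our prediction rule forces the adversary to commit $\cN\utt$ in a way that reveals the harder half of the uncertainty rather than shaving off only an exponentially small piece. A secondary subtlety is to show that events (i)–(iii) are unambiguously distinguishable from the raw observations — in particular that once two observed points genuinely share an unknown edge of $\cP$ we can recognize this fact and promote the line through them to $\cE$, without knowing which $d-1$ rows of $A$ are tight there.
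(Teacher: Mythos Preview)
Your high-level decomposition matches the paper's: every observed optimum is either the global optimum $\bbx^*$ (Corollary~\ref{cor:opt_t}) or a point on some edge of $\cP$ (Lemma~\ref{lem:edge}), and the mistake bound is a sum of charges for (a) learning $\bbx^*$, (b) discovering new edge-spaces, and (c) refining intervals on known edge-spaces. But there are two genuine gaps, and you have correctly flagged the more serious one yourself.

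\textbf{The main gap is the prediction rule.} You predict only among points that are \emph{certifiably} in $\cP\cap\cN\utt$, so that $\bbc\cdot\hat\bbx\utt\le\bbc\cdot\bbx\utt$ always. With this rule the adversary can place $\bbx\utt$ just barely outside your certified interval, so each interval-extending mistake shaves off an arbitrarily small piece and the binary-search bound fails. The paper's fix is exactly the missing idea: the algorithm deliberately predicts from an \emph{extended} region that reaches to the \emph{midpoint} of each questionable interval (rule~\ref{p3}), so the prediction may be infeasible. A mistake then yields a dichotomy: either $\bbc\cdot\hat\bbx\utt>\bbc\cdot\bbx\utt$, in which case $\hat\bbx\utt$ was infeasible and the outer half of its questionable interval becomes infeasible (\ref{u3}); or $\bbc\cdot\hat\bbx\utt<\bbc\cdot\bbx\utt$, in which case $\bbx\utt$ lay beyond the midpoint and the inner half becomes feasible (\ref{u4}). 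Either way a questionable interval halves (Lemma~\ref{lem:half}), which is what drives the $O(N\log d)$ count per edge. There is also a counterintuitive rule~\ref{p4} (predict the \emph{minimum}-objective candidate when the extended region is empty) needed to make the dichotomy go through. Your ``only-feasible-candidates'' rule cannot be patched into this; the willingness to predict possibly-infeasible midpoints is the crux.

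\textbf{A secondary gap is edge detection.} Two observed optima determine a line, but you cannot certify that both lie on the \emph{same} edge of $\cP$ from two points alone; they could lie on distinct edges and the line through them need not be an edge-space. The paper waits for three \emph{collinear} observed optima (Lemma~\ref{lem:collinear}) before promoting a line to $E\utt$, which is why update~\ref{u2} is charged $3|E_\cP|$ rather than $2|E_\cP|$.
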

\sj{what is $m$. We have never defined this before!}
\rynote{Good point!  Must have been copy and pasted from a section where $m$ was defined.  Added earlier discussion}


\ifnum\final=1
At a high level, $\learnedge$ makes a prediction and then updates its information in each day.
The three main components of $\learnedge$ are as follows.

\begin{enumerate}
\item\label{first} \textbf{Prediction Information}:
 $\learnedge$ maintains information $\cI\utt$ about the prediction history up to day $t$. 
  The primary object in $\cI\utt$ is the set of edges-spaces of 
  $\cP$ since the observed solutions 
  always lie on edges of $E_\cP$ (see
  Lemma~\ref{lem:edge}). 
\item \textbf{Prediction Rules}: Given the additional halfspace
  $\cN\utt$ and the information $\cI\utt$, $\learnedge$
  makes its prediction $\hat\bbx\utt$ based on
  \emph{prediction rules}  \ref{p1} - \ref{p4}.
\item \textbf{Update Rules}: 
$\learnedge$ will update the
  information to $\cI\uttf$ with its \emph{updating rule}
  upon making a mistake on day $t$ using the 
  observed optimal point $\bbx\utt$. 
  Our mistake bound crucially relies on the fact that each
  update will lead to progress on learning the feasible region of the
  edge-spaces (i.e. the edges $E_\cP$ of $\cP$).
  \end{enumerate}
\else

 At a high level, $\learnedge$ maintains a set of \emph{prediction
   information} $\cI\utt$ about the prediction history up to day
 $t$, and makes prediction in each day based on $\cI\utt$ and a set
 of \emph{prediction rules} $(\ref{p1}-\ref{p4})$. After making a mistake,
 $\learnedge$ updates the information with a set of \emph{update rules}
 $(\ref{u1}-\ref{u4})$.
 \fi
\ifnum\cm=0
The framework of $\learnedge$ is presented in Algorithm~\ref{alg:learnedge}. We will now present the details of each component.
\begin{algorithm}[h]
 \begin{algorithmic}
 \Procedure{$\learnedge$}{$\{\cN\utt ,\bbx\utt\}_t$}\ifnum\alt=0\hspace{41mm}\else\hspace{15mm}\fi$\triangleright$ Against adaptive adversary
	\State Initialize $\cI^{(1)} $ to be empty.	\ifnum\alt=0\hspace{76mm}\else\hspace{50mm}\fi$\triangleright$ Initialize
	\For{$t = 1,2, \cdots$}\\
	\hspace{10mm}Predict $\hat\bbx\utt$ according to one of \ref{p1} - \ref{p4} based on $\cN\utt$ and $\cI\utt$.
		\ifnum\alt=0\hspace{20mm}\else \hspace{94mm}\fi$\triangleright$ Predict
		\If{$\hat\bbx\utt \neq \bbx\utt$}\\
			\hspace{20mm}$X\uttf \gets X\utt \cup \{\bbx\utt \}.$\\
			\hspace{20mm}Update $\cI\uttf$ with \ref{u1} - \ref{u4} based on $\bbx\utt$.
				\ifnum\alt=0\hspace{40mm}\else\hspace{16mm}\fi$\triangleright$ Update
		\EndIf
	\EndFor
 \EndProcedure
\end{algorithmic}
\caption{Learning in Known Objective Problem ($\learnedge$)}
\label{alg:learnedge}
\end{algorithm}
\fi

\paragraph{Prediction Information}
It is natural to ask ``What information
is useful for prediction?" Lemma~\ref{lem:edge} establishes the
importance of the set of edges $E_\cP$ by showing that all the
observed solutions will be on an element of $E_\cP$.
\begin{lemma}
\label{lem:edge}
On any day $t$, the observed solution $\bbx\utt$ lies on an edge in $E_\cP$.
\end{lemma}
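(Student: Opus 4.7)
The plan is to show that the uniqueness and non-degeneracy assumptions force $\bbx\utt$ to lie on a subspace that is the intersection of $d-1$ hyperplanes coming from $A$, hence on an edge of $\cP$. Concretely, since $\bbx\utt$ is the unique maximizer of a linear objective over the polytope $\cP\utt$, it must be a vertex of $\cP\utt$. By Assumption \ref{assumpt:bind}, this vertex is the intersection of exactly $d$ hyperplanes of $\cP\utt$. The hyperplanes defining $\cP\utt$ are the $m$ hyperplanes from $A\bbx = \bbb$ together with the single hyperplane $\bp\utt\cdot\bbx = b\utt$ from $\cN\utt$.

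I will then split into two cases according to how many of these $d$ binding hyperplanes come from $\cP$ versus from $\cN\utt$. In the first case, all $d$ binding hyperplanes come from $\cP$; then $\bbx\utt$ is itself a vertex of $\cP$, and any vertex of a polytope is an endpoint of (and thus lies on) some edge in $E_\cP$. In the second case, $d-1$ of the binding hyperplanes come from $\cP$ and the remaining one is $\bp\utt\cdot\bbx=b\utt$. By Assumption \ref{assumpt:rank}, any $d-1$ rows of $A$ have rank $d-1$, so the intersection of these $d-1$ hyperplanes of $\cP$ is a one-dimensional affine subspace, which is by definition an edge-space $e$ of $\cP$. Since $\bbx\utt \in \cP\utt \subseteq \cP$ and $\bbx\utt \in e$, the point $\bbx\utt$ lies in $e\cap\cP$, which is precisely an edge in $E_\cP$.

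The only mild subtlety is that one needs to confirm that the case split above is exhaustive: the hyperplane from $\cN\utt$ contributes at most one constraint, so the number of $\cP$-hyperplanes binding at $\bbx\utt$ is either $d$ or $d-1$ (never fewer, by Assumption \ref{assumpt:bind}). This, combined with the rank condition of Assumption \ref{assumpt:rank} being applicable to any $d-1$ rows of $A$, is what makes the argument go through. I do not anticipate any serious obstacle here — the lemma is essentially a structural observation, and all the work has been set up by the non-degeneracy assumptions.
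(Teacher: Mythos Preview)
Your proof is correct, but it takes a somewhat different route from the paper's. The paper splits on whether the unconstrained optimum $\bbx^* := \argmax_{\bbx\in\cP}\bbc\cdot\bbx$ satisfies the new constraint $\cN\utt$. If it does, then $\bbx\utt=\bbx^*$ is a vertex of $\cP$ and we are done. If it does not, the paper proves via a convexity argument that $\bbx\utt$ must bind the new constraint (otherwise a point on the segment between $\bbx\utt$ and $\bbx^*$ would be feasible with objective at least $\bbc\cdot\bbx\utt$, contradicting uniqueness); only then does it invoke Assumption~\ref{assumpt:bind} to conclude that exactly $d-1$ constraints of $\cP$ bind.

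Your argument is more direct: you appeal to Assumption~\ref{assumpt:bind} immediately to say that exactly $d$ hyperplanes of $\cP\utt$ bind at the vertex $\bbx\utt$, and then count how many can come from $\cN\utt$. This bypasses the convexity step entirely. The trade-off is that the paper's detour yields Corollary~\ref{cor:opt_t} (if $\bbx\utt$ lies strictly inside $\cN\utt$ then $\bbx\utt=\bbx^*$) as a free byproduct, which the algorithm later relies on; your proof of Lemma~\ref{lem:edge} does not establish this, so if you were writing the paper you would need a separate short argument for the corollary. As a proof of the lemma itself, however, your version is cleaner.
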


In the proof of Lemma~\ref{lem:edge} we also show that when $\bbx\utt$
does not bind the new constraint $\cN\utt$, then $\bbx\utt$ is the solution for
the underlying LP: $\argmax_{\bbx\in \cP} \bbc \cdot \bbx$.
\begin{corollary}
If $\bbx\utt  \in \{\bbx \mid \bp\utt \bbx < b\utt \}$ then $\bbx\utt = \bbx^* \equiv \myargmax_{\bbx \in \cP} \bbc \cdot \bbx$.
\label{cor:opt_t}
\end{corollary}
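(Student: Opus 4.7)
The plan is to exploit the fact that the strict slack in the new constraint $\cN\utt$ at $\bbx\utt$ creates a neighborhood in $\cP$ that is also contained in $\cP\utt$, so local optimality on $\cP\utt$ upgrades to optimality on $\cP$. Since $\cP\utt = \cP \cap \cN\utt \subseteq \cP$, the inequality $\bbc \cdot \bbx^* \geq \bbc \cdot \bbx\utt$ holds automatically, so the real work is to rule out $\bbx^* \neq \bbx\utt$.

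First I would fix an arbitrary $\bbx' \in \cP$ with $\bbc \cdot \bbx' \geq \bbc \cdot \bbx\utt$ (eventually I will take $\bbx' = \bbx^*$). Because $\bp\utt \cdot \bbx\utt < b\utt$ by hypothesis, continuity of the linear functional $\bp\utt \cdot (\cdot)$ yields some $\epsilon > 0$ with the open ball $B(\bbx\utt, \epsilon)$ contained in the open halfspace $\{\bbx \mid \bp\utt \cdot \bbx < b\utt\}$. Consider the segment $\bbx_\lambda = (1-\lambda) \bbx\utt + \lambda \bbx'$ for $\lambda \in [0,1]$. By convexity of $\cP$, $\bbx_\lambda \in \cP$; and for $\lambda$ sufficiently small, $\bbx_\lambda \in B(\bbx\utt, \epsilon) \subseteq \cN\utt$, so $\bbx_\lambda \in \cP \cap \cN\utt = \cP\utt$.

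Next I would use the optimality of $\bbx\utt$ on $\cP\utt$ together with Assumption~\ref{assumpt:vertex}. By linearity, $\bbc \cdot \bbx_\lambda = (1-\lambda)\, \bbc \cdot \bbx\utt + \lambda\, \bbc \cdot \bbx' \geq \bbc \cdot \bbx\utt$. If $\bbc \cdot \bbx' > \bbc \cdot \bbx\utt$, this is a strict inequality for any $\lambda \in (0,1]$, contradicting the optimality of $\bbx\utt$ on $\cP\utt$. Hence $\bbc \cdot \bbx' = \bbc \cdot \bbx\utt$, which makes $\bbx_\lambda$ an optimizer on $\cP\utt$ for every small $\lambda$; uniqueness of the optimum then forces $\bbx_\lambda = \bbx\utt$, i.e.\ $\bbx' = \bbx\utt$.

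Finally I would instantiate $\bbx' = \bbx^*$: the argument above gives $\bbx^* = \bbx\utt$, which is exactly the conclusion. The only step that requires any care is the passage from the uniqueness of the optimizer on $\cP\utt$ (which is what Assumption~\ref{assumpt:vertex} grants) to the uniqueness of the maximizer $\bbx^*$ on the larger set $\cP$; this is the reason the ball argument above is needed rather than a one-line inclusion argument, and it is the only place where the strict inequality $\bp\utt \cdot \bbx\utt < b\utt$ is essential.
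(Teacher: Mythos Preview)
Your proof is correct and follows essentially the same route as the paper. The paper derives this corollary from the case analysis inside the proof of Lemma~\ref{lem:edge}: there, assuming $\bp\utt\cdot\bbx\utt < b\utt$ and $\bbx^*\notin\cN\utt$, one takes a point $\bby^*\in\mathrm{Conv}(\bbx\utt,\bbx^*)$ distinct from $\bbx\utt$ that still lies in $\cP\utt$, observes $\bbc\cdot\bby^*\geq\bbc\cdot\bbx\utt$, and contradicts Assumption~\ref{assumpt:vertex}; your segment-plus-uniqueness argument is the same idea, only packaged as a direct proof (via the $\epsilon$-ball) rather than a contrapositive inside a case split.
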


We then show how an 
edge-space $e$ of $\cP$ can be recovered after seeing 3 collinear observed solutions. 
\begin{lemma}
  Let $\bbx, \bby,\bbz$ be 3 distinct collinear points on edges of $\cP$. Then
  they are all on the \emph{same} edge of $\cP$ and the
  1-dimensional subspace containing them is an edge-space of
  $\cP$.
  \label{lem:collinear}
\end{lemma}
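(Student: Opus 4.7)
The plan is to show that the unique line $L$ through the three collinear points must itself be an edge-space of $\cP$; the fact that all three points lie on a common edge then follows immediately, since an edge-space meets the bounded polytope $\cP$ in a single segment. First I would use convexity: since $\bbx, \bby, \bbz$ all lie in $\cP$ (being on edges), the intersection $L \cap \cP$ is a non-degenerate convex subset of $L$, hence a line segment containing all three points. After relabeling if necessary, one of them --- say $\bby$ --- lies strictly between the other two along $L$, and therefore lies in the relative interior of $L \cap \cP$.

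Next, since $\bby$ lies on some edge $e$ of $\cP$, it lies on the corresponding edge-space $E$, which by definition is the intersection of $d-1$ linearly independent defining hyperplanes $H_1, \ldots, H_{d-1}$ of $\cP$. Each $H_i$ is a supporting hyperplane of $\cP$ that is tight at $\bby$. The key geometric step is the claim: \emph{if a supporting hyperplane $H$ of $\cP$ contains a point in the relative interior of a chord $L\cap\cP$, then the entire line $L$ lies in $H$.} This follows because the affine function $\bbx \mapsto \langle a_i, \bbx\rangle - b_i$ that defines $H_i$ is nonpositive on $\cP$ and vanishes at $\bby$; restricted to $L$, it is an affine function of one variable that attains its maximum $0$ at an interior point of the segment $L \cap \cP$, so it must be identically zero on $L$.

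Applying this claim to each $H_i$ gives $L \subseteq \bigcap_{i=1}^{d-1} H_i = E$. Since $L$ and $E$ are both one-dimensional affine subspaces, they must be equal, so $L$ is itself an edge-space of $\cP$. By the definition of an edge, $L \cap \cP = E \cap \cP = e$, and hence $\bbx, \bby, \bbz \in e$, proving both conclusions of the lemma simultaneously.

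The only subtle step is the supporting-hyperplane argument; the rest is bookkeeping from the definitions of edge, edge-space, and convexity, together with Assumption~\ref{assumpt:rank} to guarantee that the $d-1$ tight hyperplanes through $\bby$ really do cut out a one-dimensional subspace.
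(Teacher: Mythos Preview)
Your proof is correct and follows essentially the same approach as the paper: both arguments single out the middle point $\bby$, consider the $d-1$ defining hyperplanes that are tight at $\bby$, and show each of them must contain the entire line $L$ by the observation that an affine functional which is nonpositive on $\cP$ and vanishes at an interior point of the segment $L\cap\cP$ must vanish on all of $L$. The paper phrases this step as a two-case analysis on whether $\bbx$ or $\bbz$ binds the constraint, while you phrase it as a maximum-at-interior-point argument, but the content is identical.
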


Given the relation between observed solutions and edges, the
information $\cI\utt$ is stored as follows:
    \begin{figure}[h]
   \centering
   \begin{minipage}[c]{.4\textwidth}\scalebox{.75}{
      \includegraphics[width=6cm]{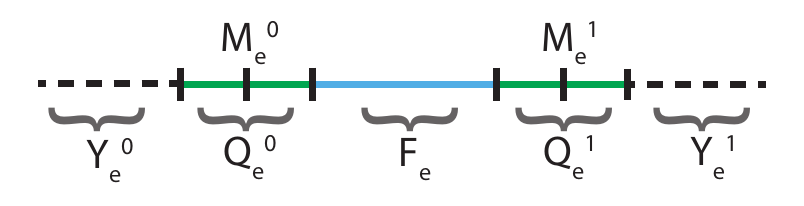}
      }
     \end{minipage}
\begin{minipage}[c]{.5\textwidth} 
    \caption{Regions on an edge-space $e$: feasible
      region $F_e$ (blue), questionable intervals $Q_e^0$ and
      $Q_e^1$ (green) with their mid-points $M_e^0$ and $M_e^1$ 
      and infeasible regions $Y_e^0$ and $Y_e^1$ (dashed).}
    \label{fig:illus}
    \end{minipage}
  \end{figure}

\begin{enumerate}[label=\textbf{I.\arabic*}]

\item({\bf Observed Solutions}) $\learnedge$ keeps track of the set of
  observed solutions that were predicted incorrectly so far $X\utt = \{\bbx^{(\tau)} : \tau \leq t \quad \hat\bbx^{(\tau)} \neq \bbx^{(\tau)}\}$ 
  and also the solution for the underlying unknown
  polytope $\bbx^* \equiv \argmax_{\bbx\in\cP} \bbc\cdot\bbx$ if
  it is observed.

\item({\bf Edges}) $\learnedge$ keeps track of the set of edge-spaces $E\utt$ given by any 3
  collinear points in $X\utt$. For each $ e\in E\utt$, it also
  maintains the regions on $e$ that are certainly \emph{feasible} or \emph{infeasible}. 
  The remaining parts of $e$ called the \emph{questionable} region is where $\learnedge$ cannot classify as 
  infeasible or feasible with certainty (see Figure~\ref{fig:illus}). Formally,
  \end{enumerate}
  
\begin{enumerate}
\item({\bf Feasible Interval}) The \emph{feasible interval} $F_e$ is an
  interval along $e$ that is identified to be on the
  boundary of $\cP$. More formally, $F_e =\mathrm{Conv}(X\utt \cap e)$.

\item({\bf Infeasible Region}) The \emph{infeasible region}
  $Y_e = Y_e^0 \cup Y_e^1$ is the union of two disjoint intervals
  $Y_e^0$ and $Y_e^1$ that are identified to be outside of
  $\cP$. By Assumption~\ref{assumpt:poly}, we initialize the
  infeasible region $Y_e$ to $\{x\in e\mid
  \|x\|_\infty >1\}$ for all $e$.

\item({\bf Questionable Region}) The \emph{questionable region}
  $Q_e = Q_e^0 \cup Q_e^1$ on $e$ is the union of two disjoint
  \emph{questionable intervals} along $e$. Formally, $Q_e = e \setminus (F_e \cup
  Y_e)$.  The points in $Q_e$ cannot be
  certified to be either inside or outside of $\cP$ by $\learnedge$.

\item({\bf Midpoints in $Q_e$}) For each questionable interval
  $Q_e^i$, let $M_e^i$ denote the midpoint of $Q_e^i$.

\end{enumerate}
\ifnum\final=0 
We add the superscript $(t)$ to show the dependence of
these quantities on days. Furthermore, we eliminate the subscript
$e$ when taking the union over all elements in $E\utt$, e.g. $F\utt =
\bigcup_{e \in E\utt} F_e\utt$.  So the information
$\cI\utt$ can be written as follows:
$
\cI\utt = \left(X\utt, E\utt, F\utt, Y\utt, Q\utt, M\utt \right).
\label{eq:info}
$\fi

\paragraph{Prediction Rules}
We now focus on the prediction rules of $\learnedge$. On day $t$, let ${\tN\utt}
= \{\bbx\mid \bp\utt\cdot\bbx= b\utt\}$ be the hyperplane specified by
the additional constraint $\cN\utt$. If $\bbx\utt\notin\tN\utt$, then
$\bbx\utt=\bbx^*$ by Corollary~\ref{cor:opt_t}.
So whenever the algorithm observes $\bbx^*$, it will
store $\bbx^*$ and predict it in the future days when $\bbx^*\in \cN\utt$. This is case~\ref{p1}.
So in the remaining cases we know $\bbx^*\notin \cN\utt$. 

The analysis of Lemma~\ref{lem:edge}  shows that $\bbx\utt$ must
be in the intersection between $\tN\utt$ and the edges
$E_\cP$, so $\bbx\utt = \argmax_{\bbx\in \tN\utt\cap
E_\cP} \bbc \cdot \bbx$. Hence, $\learnedge$
can restrict its prediction to the following \emph{candidate} set:
$\mathrm{Cand}\utt =\{ (E\utt\cup X\utt)\setminus \bar
E\utt \} \cap \tN\utt$ where $\bar E\utt = \{e \in E\utt \mid
e\subseteq \tN\utt\}.$  
As we show in Lemma~\ref{lem:corner-case},
$\bbx\utt$ will not be in $\bar E\utt$,
so it is safe to remove $\bar E\utt$ from $\mathrm{Cand}\utt$.
  \begin{lemma}
  \label{lem:corner-case}
Let $e $ be an edge-space of $\cP$ such that $e\subseteq \tN\utt$, then $\bbx\utt\not\in e$.
\end{lemma}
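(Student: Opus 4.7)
The plan is to derive a contradiction by assuming $\bbx\utt \in e$ and showing this violates the non-degeneracy assumption on $\cP\utt$ (Assumption~\ref{assumpt:bind}). First I would recall that by Assumption~\ref{assumpt:vertex}, $\bbx\utt$ is the unique maximizer of a linear objective over $\cP\utt$, so it must be a vertex of $\cP\utt$. Thus, by Assumption~\ref{assumpt:bind} together with the definition of a vertex, exactly $d$ hyperplanes of $\cP\utt$ bind at $\bbx\utt$ and their normals are linearly independent.

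Next, I would carry out the following linear algebra step: let $\mathbf{a}_1, \ldots, \mathbf{a}_{d-1}$ be the normals of the $d-1$ linearly independent hyperplanes of $\cP$ whose intersection is the edge-space $e$, and let $\bbv \neq 0$ be a direction vector for the line $e$. By definition $\mathbf{a}_i \cdot \bbv = 0$ for every $i$, so $\{\mathbf{a}_1, \ldots, \mathbf{a}_{d-1}\}$ is a basis for the $(d-1)$-dimensional orthogonal complement of $\bbv$. Since $e \subseteq \tN\utt$, the line $e$ is contained in the hyperplane with normal $\bp\utt$, which forces $\bp\utt \cdot \bbv = 0$, and hence $\bp\utt \in \mathrm{span}\{\mathbf{a}_1, \ldots, \mathbf{a}_{d-1}\}$.

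Now assume $\bbx\utt \in e$. Then $\bbx\utt$ satisfies each of the $d-1$ hyperplanes of $\cP$ defining $e$ with equality, and it also satisfies $\tN\utt$ with equality (since $e \subseteq \tN\utt$). This identifies $d$ distinct hyperplanes of $\cP\utt$ binding at $\bbx\utt$. By the previous paragraph, their normals $\{\mathbf{a}_1, \ldots, \mathbf{a}_{d-1}, \bp\utt\}$ are linearly dependent. But $\bbx\utt$ being a vertex of $\cP\utt$ requires any set of $d$ binding hyperplanes to have linearly independent normals; this directly contradicts Assumption~\ref{assumpt:bind}, completing the proof.

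The main obstacle is really just formalizing the geometric fact that a $1$-dimensional affine subspace sitting inside a hyperplane forces the hyperplane's normal into the span of the $d-1$ normals cutting out that line — everything else follows cleanly from the vertex/non-degeneracy assumptions already in place.
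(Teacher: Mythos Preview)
Your argument is correct and is essentially the same as the paper's. The paper phrases the contradiction as a constraint count --- any point on $e$ already binds $d-1$ constraints of $\cP$ plus $\tN\utt$, so a vertex on $e$ would have to bind at least $d+1$ constraints, violating Assumption~\ref{assumpt:bind} --- whereas you unpack the implicit step by showing directly that $\bp\utt\in\mathrm{span}\{\mathbf{a}_1,\ldots,\mathbf{a}_{d-1}\}$, so those $d$ binding normals are dependent; both routes pivot on the same non-degeneracy assumption and the same geometric fact.
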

However, $\mathrm{Cand}\utt$ can be empty or
only contain points in the infeasible regions of the edge-spaces. If so, then there is simply not enough
information to predict a feasible point in $\cP$. Hence, $\learnedge$ 
predicts an arbitrary point outside of $\mathrm{Cand}\utt$. This is case~\ref{p2}.

Otherwise $\mathrm{Cand}\utt$ contains points from the feasible and questionable regions of the edge-spaces. $\learnedge$ predicts from 
a subset of $\mathrm{Cand}\utt$ called the
\emph{extended feasible region} $\mathrm{Ext}\utt$
instead of directly
predicting from $\mathrm{Cand}\utt$.
$\mathrm{Ext}\utt$ contains the whole feasible region and only parts 
of the questionable region on all the edge-spaces in $E\utt\setminus \bar E\utt$.
We will show later that 
this guarantees $\learnedge$ makes progress in
learning the true feasible region on some edge-space upon making a mistake.
More formally, $\mathrm{Ext}\utt$ is the intersection of $\tN\utt$ with the union of 
intervals between the two mid-points $(M_e^0)\utt$ and
$(M_e^1)\utt$ on every edge-space $e\in E\utt\setminus \bar
E\utt$ and all points in
$X\utt$:
$
\mathrm{Ext}\utt = \left\{X\utt \cup \left\{\cup_{e\in E\utt \setminus \bar
E\utt} \mathrm{Conv}\left((M_e^0)\utt, (M_e^1)\utt \right)\right\}\right\} \cap \tN\utt.
$

In \ref{p3}, if $\mathrm{Ext}\utt\ne\emptyset$ then $\learnedge$
predicts the point with the highest objective value in $\mathrm{Ext}\utt$.

Finally, if $\mathrm{Ext}\utt=\emptyset$, then we know $\tN\utt$ only intersects
within the questionable regions of the learned edge-spaces. In this case, $\learnedge$
predicts the intersection point with the lowest objective value, which
corresponds to \ref{p4}. Although it might seem counter-intuitive to predict
the point with the lowest objective value, this guarantees that $\learnedge$ makes progress in
learning the true feasible region on some edge-space upon making a mistake. 
The prediction rules are summarized as follows: 

\begin{enumerate}[label=\textbf{P.\arabic*}]
\item\label{p1} First, if $\bbx^*$ is observed and $\bbx^*\in\cN\utt$, then predict $\hat\bbx\utt \gets \bbx^*$;
\rynote{Still type mismatch between the set Cand (elements are points) and Y (elements are all points on a semi-infinite line).  So an element of Cand cannot be an element of Y...  How about if $\forall \bbx \in \text{Cand}\utt$ there is an  $e \in E\utt$ s.t. $\bbx\in Y_e\utt$ or Cand$\utt$ is empty then predict...}
\item\label{p2} Else if $\mathrm{Cand}=\emptyset$ or $\mathrm{Cand}\utt \subseteq \bigcup_{e\in E\utt} Y_e\utt$, then predict any point outside $\mathrm{Cand}\utt$;

\item\label{p3} Else if $\mathrm{Ext}\utt \neq\emptyset$, then predict $\hat \bbx\utt = \myargmax_{\bbx\in \mathrm{Ext}\utt} \bbc \cdot \bbx$;

\item\label{p4} Else, predict 
$\hat \bbx\utt = \myargmin_{\bbx\in \mathrm{Cand}\utt}\bbc \cdot \bbx$.

%
%
\end{enumerate}

\paragraph{Update Rules}
Next we describe how $\learnedge$ updates its information.
Upon making a mistake, $\learnedge$  adds $\bbx\utt$
to the set of previously observed solutions $X\utt$ i.e. $X\uttf \gets X\utt \cup \{ \bbx\utt\}$. Then it performs
one of the following four mutually exclusive update rules (\ref{u1}-\ref{u4}) in order.
\begin{enumerate}[label=\textbf{U.\arabic*}]

\item\label{u1}
If $\bbx\utt\notin\tN\utt$,
then $\learnedge$ records $\bbx\utt$ as the unconstrained optimal
solution $\bbx^*$.
\item\label{u2}
Then if $\bbx\utt$ is not on any learned edge-space in
$E\utt$, $\learnedge$ will try to learn a new edge-space 
by checking the collinearity of $\bbx\utt$ and any couple of points in $X\utt$.
So after this update $\learnedge$ might recover a new edge-space of the polytope.

\end{enumerate}
 If the previous updates were not invoked, then $\bbx\utt$ was on some  learned edge-space $e$. 
$\learnedge$ then compares the objective
values of $\hat{\bbx}\utt$ and $\bbx\utt$ (we know $\bbc\cdot\hat\bbx\utt \ne \bbc \cdot \bbx\utt$ by Assumption~\ref{assumpt:vertex}):
\begin{enumerate}[label=\textbf{U.\arabic*}]\addtocounter{enumi}{2}
\item\label{u3}
If 
$\bbc\cdot \hat \bbx\utt > \bbc\cdot\bbx\utt$, then $\hat\bbx\utt$ must be infeasible and $\learnedge$ then
updates the questionable and infeasible regions for $e$.  
\item\label{u4}
If
$\bbc\cdot\hat\bbx\utt < \bbc \cdot \bbx\utt$ then
$\bbx\utt$ was outside of
the extended feasible region of $e$.
$\learnedge$ then updates the questionable region and feasible interval on
$e$.
\end{enumerate}
\ifnum\final=1
\footnote{In the full version, we give a formal argument on why such edge-space
always exist when \ref{u3} or \ref{u4} is invoked.}
\fi

In both of~\ref{u3} and~\ref{u4}, $\learnedge$ will shrink some questionable interval substantially
till the interval has length less than $2^{-N}$ in which case 
Assumption~\ref{assumpt:precision} implies that the interval contains no points.
So $\learnedge$ can update the adjacent feasible region and infeasible interval accordingly.

\subsection{Analysis of $\learnedge$}\label{sec:edge_analysis}
Whenever $\learnedge$ makes a mistake, one of the update rules \ref{u1} - \ref{u4} is
invoked.  So the number of mistakes of $\learnedge$
is bounded by the number of times each update rule is
invoked. 
The mistake bound
of $\learnedge$ in Theorem~\ref{thm:learn-edge} is hence the sum of mistakes bounds in Lemmas~\ref{lem:u1}-\ref{lem:u34}.

\begin{lemma}
  Update \ref{u1} is invoked at most 1 time.
    \label{lem:u1}
\end{lemma}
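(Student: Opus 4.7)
\medskip

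The plan is to show that once update \ref{u1} fires, the value $\bbx^*$ is permanently recorded, and thereafter the condition triggering \ref{u1} (namely $\bbx\utt \notin \tN\utt$) can never again coincide with a mistake. I will split the argument by cases based on whether the recorded $\bbx^*$ lies in the new constraint $\cN\utt$ on each subsequent day.

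First I would observe that \ref{u1} is invoked on day $t$ only when $\learnedge$ made a mistake and $\bbx\utt \notin \tN\utt$, i.e. $\bp\utt \cdot \bbx\utt < b\utt$. By Corollary~\ref{cor:opt_t}, in this case $\bbx\utt$ equals the unconstrained optimum $\bbx^* = \argmax_{\bbx\in\cP}\bbc\cdot\bbx$ of the hidden polytope $\cP$. Hence after \ref{u1} fires once, $\learnedge$ knows $\bbx^*$ for all future days.

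Next I would argue that on any future day $s > t$, \ref{u1} cannot be invoked. Consider two subcases. If $\bbx^* \in \cN\utt[s]$ then prediction rule \ref{p1} applies and outputs $\hat\bbx\utt[s] = \bbx^*$. Since $\bbx^*$ maximizes $\bbc \cdot \bbx$ over $\cP \supseteq \cP\utt[s]$ and is feasible in $\cP\utt[s]$, it is also optimal for the day-$s$ LP, so $\bbx\utt[s] = \bbx^* = \hat\bbx\utt[s]$ and no mistake occurs, so no update is performed. If instead $\bbx^* \notin \cN\utt[s]$, then $\bp\utt[s]\cdot\bbx^* > b\utt[s]$, so $\bbx^*$ is infeasible at day $s$. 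By Corollary~\ref{cor:opt_t} contrapositively, this means $\bbx\utt[s]$ must satisfy $\bp\utt[s]\cdot\bbx\utt[s] = b\utt[s]$, i.e. $\bbx\utt[s] \in \tN\utt[s]$, so the triggering condition for \ref{u1} fails.

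Combining these two subcases, on every day strictly after the first invocation of \ref{u1}, the update \ref{u1} cannot be applied, which gives the desired bound. I do not expect any real obstacle here; the argument is essentially a direct consequence of Corollary~\ref{cor:opt_t} together with the fact that \ref{p1} uses the recorded $\bbx^*$ to predict correctly whenever it remains feasible.
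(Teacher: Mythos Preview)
Your proof is correct and follows the same approach as the paper's own proof, which simply notes that once \ref{u1} fires and $\bbx^*$ is recorded, prediction rule \ref{p1} guarantees a correct prediction whenever $\bbx^*$ is feasible. Your case split makes explicit what the paper leaves implicit: when $\bbx^* \notin \cN^{(s)}$, the day-$s$ optimum must bind the new constraint (Corollary~\ref{cor:opt_t}), so the trigger for \ref{u1} cannot hold.
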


\begin{lemma}
  Update \ref{u2} is invoked at most $3|E_\cP|$ times. \footnote{The dependency on $|E_\cP|$ can be improved by replacing it with the set of edges of $\cP$ 
  on which an optimal solution is observed.  This applies to all the dependencies on $|E_\cP|$ in our bounds.}
    \label{lem:u2}
\end{lemma}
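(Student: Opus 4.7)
The plan is to prove the bound by charging each invocation of \ref{u2} to a specific edge of the unknown polytope $\cP$, and then showing that each edge can be charged at most three times.

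First, I would invoke Lemma~\ref{lem:edge} to observe that every $\bbx\utt$ that is ever seen lies on some edge of $E_\cP$. By the definition of \ref{u2}, this rule is invoked on day $t$ precisely when $\bbx\utt \notin e'$ for every already-learned edge-space $e' \in E\utt$. Thus if \ref{u2} is invoked on day $t$ and $\bbx\utt$ lies on the edge-space $e \in E_\cP$, then $e$ has not yet been added to $E\utt$. For each $e \in E_\cP$, let $k_e$ denote the number of invocations of \ref{u2} on days $t$ such that $\bbx\utt$ lies on $e$. The total number of invocations of \ref{u2} is then $\sum_{e \in E_\cP} k_e$, so it suffices to show $k_e \le 3$ for every $e \in E_\cP$.

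The key step is to bound $k_e$ using Lemma~\ref{lem:collinear}. Fix an edge-space $e \in E_\cP$ and suppose \ref{u2} has been invoked on three distinct days $t_1 < t_2 < t_3$ with $\bbx^{(t_i)} \in e$. After the update on day $t_3$, the set $X^{(t_3+1)}$ contains three distinct observed solutions $\bbx^{(t_1)}, \bbx^{(t_2)}, \bbx^{(t_3)}$, all lying on $e$ and hence collinear. By Lemma~\ref{lem:collinear} applied to these three points (each of which lies on an edge of $\cP$ by Lemma~\ref{lem:edge}), the one-dimensional subspace containing them is an edge-space of $\cP$, which must be $e$ itself. Since the algorithm checks collinearity of $\bbx\utt$ with every pair of points in $X\utt$ during \ref{u2}, it will detect this triple on day $t_3$ and add $e$ to $E^{(t_3+1)}$. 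Consequently, on any later day $t > t_3$, we have $e \in E\utt$, and any future observation on $e$ would fail the triggering condition for \ref{u2}. This gives $k_e \le 3$.

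Summing over all edges yields $\sum_{e \in E_\cP} k_e \le 3|E_\cP|$, which is the desired bound. The main subtlety to be careful about is ensuring that no spurious identification of $e$ can occur from three collinear points that are not all on $e$; this is ruled out by Lemma~\ref{lem:collinear}, which guarantees that collinear triples on edges of $\cP$ must share a common edge, so the charging argument is unambiguous.
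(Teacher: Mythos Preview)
Your proposal is correct and follows essentially the same approach as the paper's proof: each invocation of \ref{u2} is charged to an edge of $\cP$ on which $\bbx\utt$ lies (via Lemma~\ref{lem:edge}), and Lemma~\ref{lem:collinear} guarantees that after three such observations the edge-space is recovered and added to $E\utt$, so no further \ref{u2} invocations can be charged to that edge. The paper's argument is more terse but structurally identical; the only detail it makes explicit that you leave implicit is that each invocation of \ref{u2} yields a \emph{new} point (i.e., $\bbx\utt\notin X\utt$), which is what ensures the three points on $e$ are distinct so that Lemma~\ref{lem:collinear} applies.
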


\begin{lemma}
  Updates \ref{u3} and \ref{u4} are invoked at
  most $O(|E_\cP| N \log(d))$ times.
 \label{lem:u34}
\end{lemma}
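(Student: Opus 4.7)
The plan is to show that every invocation of update rule \ref{u3} or \ref{u4} strictly halves the length of at least one questionable interval $Q_{e'}^i$ associated with some learned edge-space $e' \in E\utt$ and $i\in\{0,1\}$, and then to count how many times any single $Q_{e'}^i$ can be halved. By Assumption~\ref{assumpt:poly}, $Q_{e'}^i$ initially lies inside $e' \cap \{\bbx \mid \|\bbx\|_\infty \le 1\}$, a segment of Euclidean length at most $2\sqrt{d}$. By Assumption~\ref{assumpt:precision}, two distinct vertices of $\cP$ lying on $e'$ differ by at least $2^{-N}$ in Euclidean norm, so once $Q_{e'}^i$ has length below $2^{-N}$ no vertex of $\cP$ lies in its interior; since the feasibility boundary on $e'$ is at a vertex of $\cP$, the interval is then fully resolvable and can be absorbed into the adjacent feasible or infeasible region. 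Hence each $Q_{e'}^i$ sustains at most $\log_2(2\sqrt{d}\cdot 2^N) = O(N+\log d)$ halvings, and with $2|E_\cP|$ intervals total this yields the claimed $O(|E_\cP|\cdot N\log d)$ bound.

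To justify the halving step I would first narrow down the relevant (prediction, update) pairs. Since \ref{u1} and \ref{u2} are checked first, whenever \ref{u3} or \ref{u4} fires we have $\bbx\utt\in\tN\utt$ and $\bbx\utt$ on some learned $e\in E\utt$; Lemma~\ref{lem:corner-case} gives $e\notin\bar E\utt$, so $\bbx\utt\in\mathrm{Cand}\utt$. This rules out rule \ref{p1} (it would produce $\hat\bbx\utt=\bbx^*=\bbx\utt$ by optimality of $\bbx^*$ on $\cP\supseteq\cP\utt$) and rule \ref{p2} (its triggers require $\mathrm{Cand}\utt$ empty or contained in the infeasible regions, both violated by the feasible $\bbx\utt\in\mathrm{Cand}\utt$). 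Furthermore, under \ref{p4} the prediction is the argmin of $\bbc\cdot\bbx$ over $\mathrm{Cand}\utt$, forcing $\bbc\cdot\hat\bbx\utt\le\bbc\cdot\bbx\utt$ and ruling out the trigger of \ref{u3}. Only \ref{p3}+\ref{u3}, \ref{p3}+\ref{u4}, and \ref{p4}+\ref{u4} remain.

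For \ref{p3}+\ref{u3}, the inequality $\bbc\cdot\hat\bbx\utt>\bbc\cdot\bbx\utt$ together with the optimality of $\bbx\utt$ forces $\hat\bbx\utt$ to be infeasible in $\cP$; since $\hat\bbx\utt\in\mathrm{Ext}\utt$ and every point of $X\utt$ is feasible, $\hat\bbx\utt$ lies in $\mathrm{Conv}((M_{e'}^0)\utt,(M_{e'}^1)\utt)\cap\tN\utt$ for some $e'$, placing it in the ``inner half'' of some $Q_{e'}^i$ (between the midpoint $M_{e'}^i$ and $F_{e'}$). Propagating infeasibility outward from $\hat\bbx\utt$ extends $Y_{e'}^i$ past the old midpoint, shrinking $Q_{e'}^i$ to at most half its previous length. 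For \ref{p3}+\ref{u4} and \ref{p4}+\ref{u4} I would show $\bbx\utt\notin\mathrm{Ext}\utt$: in the first case because $\bbc\cdot\bbx\utt>\bbc\cdot\hat\bbx\utt=\max_{\bbx\in\mathrm{Ext}\utt}\bbc\cdot\bbx$, and in the second because $\mathrm{Ext}\utt=\emptyset$. Since $F_e\subseteq\mathrm{Conv}(M_e^0,M_e^1)$, this forces $\bbx\utt$ into the ``outer half'' of some $Q_e^i$ (between $M_e^i$ and $Y_e^i$); adding the feasible $\bbx\utt$ to $F_e$ then pushes the feasibility boundary past the old midpoint, again halving $Q_e^i$.

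The main obstacle is the geometric bookkeeping in the previous paragraph: one has to track carefully which side of each questionable interval the offending point sits on (the ``inner'' half for \ref{u3}, the ``outer'' half for \ref{u4}) and verify in each case that the old midpoint is actually crossed so that the post-update interval is at most half the previous length. Once the halving claim is established, the initial Euclidean length bound $2\sqrt{d}$ combined with the precision floor $2^{-N}$ turns it into the stated $O(|E_\cP|\cdot N\log d)$ count.
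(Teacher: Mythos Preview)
Your proposal is correct and follows essentially the same approach as the paper: establish that each invocation of \ref{u3} places $\hat\bbx\utt$ in the inner half of some questionable interval (the paper's Lemma~\ref{lem:mod1}) and each invocation of \ref{u4} places $\bbx\utt$ in the outer half (the paper's Lemma~\ref{lem:mod2}), conclude that some $Q_{e}^i$ is halved (the paper's Lemma~\ref{lem:half}), and then combine the $2\sqrt{d}$ initial length with the $2^{-N}$ precision floor to count. Your case analysis by (prediction rule, update rule) pairs is a slight repackaging of the paper's organization but the content is the same.
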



\subsection{Necessity of the Precision Bound} \label{sec:lowerbound}
\snote{this section is now a remark in NIPS}
\rynote{Good!}
We show the necessity of
Assumption~\ref{assumpt:precision} by showing that the dependence on
the precision parameter $N$ in our mistake bound is tight.~We show that
subject to
Assumption \ref{assumpt:precision}, there exist a polytope  and a sequence of additional constraints such that any learning algorithm
will make $\Omega(N)$ mistakes. This implies that without any upper bound on precision,
it is impossible to learn with finite mistakes.

%
\begin{theorem}
\label{thm:lower}
For any learning algorithm $\cA$ in the Known Objective Problem and any $d \geq 3$, 
there exists a polytope $\cP$ and a sequence of additional constraints
$\{\cN\utt\}_t$ such that the number of mistakes
made by $\cA$ is at least $\Omega(N)$.~\footnote{
We point out that the condition $d \geq 3$ is necessary in the statement of Theorem~\ref{thm:impossibility} since there
exists learning algorithms for $d=1$ and $d=2$ with finite mistake bounds independent of $N$.
\ifnum\cm=1
See the supplementary material.
\else
See Appendix~\ref{sec:1-2d}.
\fi
}
\label{thm:impossibility}
\end{theorem}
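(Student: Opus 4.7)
The plan is a halving-style adversary argument carried out in $\R^3$; for higher $d$ we embed the 3D construction into the first three coordinates and freeze the rest. We construct a family $\{\cP_b\}_{b=0}^{2^N-1}$ of polytopes in $\R^3$, all satisfying Assumption~\ref{assumpt:precision}, indexed by an integer $b$. The family is chosen so that all polytopes share the same ``visible'' vertices and differ only in the location of a single hidden vertex $v_b$, whose displacement along one axis encodes $b/2^N$. The fixed objective $\bbc$ will be chosen so that the optimum of the unconstrained LP over $\cP_b$ does not by itself reveal $b$.

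The adversary maintains a candidate set $\cS\utt \subseteq \{0, \ldots, 2^N - 1\}$ of indices consistent with the history up to round $t$, starting with $\cS^{(1)} = \{0, \ldots, 2^N-1\}$. On round $t$, it presents a hyperplane $\cN\utt$ tailored to the current $\cS\utt$: the halfspace is chosen so that on $\cP_b \cap \cN\utt$ the optimum takes only two possible values $x_L\utt, x_U\utt$ across $b \in \cS\utt$, with $x_L\utt$ realized by one half of $\cS\utt$ (those $b$ with $b/2^N$ below a bisecting threshold) and $x_U\utt$ by the other half. Whatever the learner predicts, $\hat{\bbx}\utt$ matches at most one of $\{x_L\utt, x_U\utt\}$; the adversary declares $\bbx\utt$ to be the other, charging a mistake and updating $\cS^{(t+1)}$ to the half of $\cS\utt$ consistent with the declared optimum.

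Because each observation reveals only one bit about $b$ (which side of the bisecting threshold it lies on), after $M$ mistakes we still have $|\cS^{(M+1)}| \geq 2^N / 2^M$; the adversary can continue so long as $|\cS\utt| \geq 2$, giving $M \geq N - 1 = \Omega(N)$. At the end of the interaction, any $b^*$ remaining in the candidate set yields a concrete polytope $\cP_{b^*}$ together with the realized sequence $\{\cN\utt\}_t$ on which $\cA$ provably made $\Omega(N)$ mistakes, establishing the claim.

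The main obstacle is engineering the geometry so that, for every achievable candidate set $\cS\utt$, one can find a hyperplane whose optimum is a two-valued function of $b$ that bisects $\cS\utt$. The essential requirement is that the optimum depend on $b$ only through a coarse comparison (is $b/2^N$ above or below the threshold encoded by $\cN\utt$?) rather than leaking the precise value of $b$ through the coordinates of the observed point. This is where $d \geq 3$ is needed: in $\R^3$ there is enough room to arrange the two decoy optima $x_L\utt, x_U\utt$ so that they are both anchored at fully known vertices or edges of $\cP_b$, making them independent of $b$ within each half, while the position of $v_b$ itself is ``shadowed'' behind these known features. In $\R^2$, by contrast, one can argue (via Lemma~\ref{lem:collinear}) that every mistake either identifies a new edge of the polygon or pinpoints an existing edge exactly, so the adversary can no longer sustain a long binary search and the mistake count becomes independent of $N$.
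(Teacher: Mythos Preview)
Your high-level strategy matches the paper's: an adaptive halving adversary in $\R^3$ that, at each round, presents a constraint with two candidate optima and declares whichever one the learner did not guess; the $d>3$ case follows by padding with dummy coordinates and negative objective weights on the extras. So the skeleton is right.

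However, the part you flag as ``the main obstacle'' is in fact the entire content of the proof, and your proposal stops short of supplying it. The paper does not work with a fixed family $\{\cP_b\}$ differing in a single hidden vertex; instead, the adversary maintains \emph{two} shrinking intervals $R_1, R_2$ along two distinguished edges of a to-be-determined polytope. At round $t$ the two candidate optima are explicit points $r_1\utt = (0, 1, \mathrm{mid}(R_1))$ and $r_2\utt = (1, \mathrm{mid}(R_2), 1+\epsilon\cdot\mathrm{mid}(R_2))$, and the added hyperplane is chosen to pass through both (and a third non-collinear anchor). After the learner's guess, the adversary halves $R_2$ accordingly and also halves $R_1$. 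Only after $N$ rounds does it commit to a concrete polytope: an explicit $4\times 3$ constraint matrix parameterized by the final midpoints $f_1, f_2$ is written down, and then it is verified edge-by-edge that every previously revealed $r_i\utt$ is indeed the optimum of the $t$-th constrained LP over this polytope. That consistency verification is the delicate part, and it is where the two-edge structure earns its keep: each candidate point lies on a fixed edge-space of the final polytope regardless of $f_1,f_2$, so its feasibility and optimality can be certified directly.

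Your single-hidden-vertex formulation may be workable, but you would need to exhibit the polytope family, the hyperplane sequence, and carry out the analogous consistency verification; the qualitative claim that ``in $\R^3$ there is enough room'' is not a substitute for that construction. As it stands, the proposal identifies the correct adversary template but leaves the geometric construction---and hence the proof---undone.
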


 \ifnum\final=0
\subsection{Stochastic Setting} \label{sec:stochastic}
\ifnum\final=0
Given the lower bound in Theorem~\ref{thm:impossibility}, we
ask ``In what settings we can still learn without an upper bound on the precision to which constraints are
specified?'' The lower bound  implies we must abandon the adversarial
setting so we consider a PAC style variant.
\else
Given the lower bound mentioned above, we now ask in what settings we
can still learn when we do not have a uniform upper bound on the
precision to which constraints can be specified.
\fi
In this variant, the additional constraint at
each day $t$ is drawn i.i.d. from some fixed but unknown
distribution $\cD$ over $\R^d\times
\R$ such that each point $(\bp, b)$ drawn from $\cD$ corresponds to
the halfspace $\cN=\{ \bbx \mid \bp \cdot \mathbf{x} \leq b\}$.
We make no assumption on the form of $\cD$ and require our bounds to hold in the worst case over all choices of $\cD$.

We describe $\learnhull$ an algorithm based on the following high level idea: 
$\learnhull$ keeps
track of the convex hull $\cC^{(t-1)}$ of all the solutions observed up to day $t$. 
$\learnhull$ then behaves as if this convex hull is the entire feasible region. 
So at day $t$, given the constraint $\cN\utt = \{ \bbx \mid \bp^{(t)}\cdot \bbx\leq b\utt \}$, $\learnhull$ predicts $\hat\bbx\utt$ where
\rynote{Commented out $\hat\cC\utt$ notation because we never use it later.}
\ifnum\cm=1
$
\hat \bbx^{(t)} = {\argmax}_{\bbx \in \cC^{(t-1)}\cap \cN\utt} \enspace\bbc\cdot \bbx.
\label{eq:convex_guess}
$
\else
\begin{equation}
\hat \bbx^{(t)} = {\argmax}_{\bbx \in \cC^{(t-1)}\cap \cN\utt} \enspace\bbc\cdot \bbx.
\label{eq:convex_guess}
\end{equation}
\fi

$\learnhull$'s hypothetical feasible region is therefore always a subset of the true feasible region -- i.e. it can never make a mistake because its prediction was infeasible, but only because its prediction was sub-optimal. Hence, whenever $\learnhull$ makes a mistake, it must have observed a point that expands the convex hull. Hence, whenever it fails to predict $\bbx^{(t)}$,
$\learnhull$ will enlarge its feasible region by adding the point
$\bbx^{(t)}$ to the convex hull:
\ifnum\cm=1
$
\cC^{(t)} \gets \mathrm{Conv}(\cC^{(t-1)} \cup \{\bbx^{(t)}\}),
\label{eq:new_convex}
$
\else
\begin{equation}
\cC^{(t)} \gets \mathrm{Conv}(\cC^{(t-1)} \cup \{\bbx^{(t)}\}),
\label{eq:new_convex}
\end{equation}
\fi
otherwise it will simply set $\cC^{(t)} \gets \cC^{(t-1)}$ for the next day. 
\ifnum\cm=0
\ifnum\alt=0
$\learnhull$ is described formally in Algorithm~\ref{alg:cvxhull}.

\ifnum\alt=0
\begin{algorithm}[h]
 \begin{algorithmic}
\Procedure{$\learnhull$}{$\cD$}
\State $\cC^{(0)} \gets \emptyset.$  \ifnum\alt=0\hspace{90mm}\else\hspace{10mm}\fi$\triangleright$ Initialize
\State Observe $\cN\utt~\sim D$ and set $\hat{\bbx}\utt$ as in (\ref{eq:convex_guess}). \ifnum\alt=0\hspace{48mm}\else\hspace{10mm}\fi$\triangleright$ Predict
\State Observers $\bbx\utt = \myargmax_{\bbx\in \cP\cap\cN\utt} \bbc\cdot\bbx$ and update $\cC\utt$ as in (\ref{eq:new_convex}). \ifnum\alt=0\hspace{13mm}\else\hspace{10mm}\fi$\triangleright$ Update
\EndProcedure
\end{algorithmic}
\caption{Stochastic Procedure ($\learnhull$)}
\label{alg:cvxhull}
\end{algorithm}
\fi
\fi
\fi
We show that the
expected number of mistakes of $\learnhull$ over $T$ days is linear in the number of edges of $\cP$ and only logarithmic in $T$.
\footnote{$\learnhull$ can be implemented efficiently in time $\poly(T, N, d)$ if all of the coefficients in the unknown constraints in $\cP$ are represented in $N$ bits.
Note that given the 
observed solutions so far and a 
new point, a separation oracle can be implemented in time $\poly(T, N, d)$ using a LP solver.
}
\begin{theorem}
\label{thm:mistakebdd}
For any $T > 0$ and any constraint distribution $\cD$, the expected
number of mistakes of $\learnhull$ after $T$ days
is bounded by $O\left(|E_{\cP}| \log(T) \right)$.
\end{theorem}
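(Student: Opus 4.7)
The proof plan is to combine a structural reformulation of ``mistake'' with a standard leave-one-out exchangeability argument and a short geometric count.

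\textbf{Step 1 (Reformulate a mistake).} First, I would observe that $\cC^{(t-1)}\subseteq \cP$, so $\hat{\bbx}\utt$ is always a feasible competitor to $\bbx\utt$; hence $\bbc\cdot\hat{\bbx}\utt\leq\bbc\cdot\bbx\utt$, with equality iff $\hat{\bbx}\utt=\bbx\utt$ by Assumption~\ref{assumpt:vertex}. From the update rule~\eqref{eq:new_convex}, together with the fact that when no mistake occurs $\bbx\utt$ is already in $\cC^{(t-1)}$, it follows that $\cC^{(t-1)}=\mathrm{Conv}\{\bbx^{(1)},\dots,\bbx^{(t-1)}\}$ at all times. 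Consequently, $\learnhull$ makes a mistake at round $t$ iff $\bbx\utt\notin\mathrm{Conv}\{\bbx^{(1)},\dots,\bbx^{(t-1)}\}$.

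\textbf{Step 2 (Exchangeability / leave-one-out).} Because $\cN^{(1)},\dots,\cN\utt$ are i.i.d.\ from $\cD$ and each $\bbx^{(i)}$ is a deterministic function of $\cN^{(i)}$, the sequence $(\bbx^{(1)},\dots,\bbx\utt)$ is exchangeable. By symmetry across indices,
$$\Pr[\text{mistake at round }t] \;=\; \Pr\!\left[\bbx\utt\notin\mathrm{Conv}\{\bbx^{(j)}:j<t\}\right] \;=\; \frac{1}{t}\,\mathbb{E}[V_t],$$
where $V_t:=\bigl|\{\,i\in[t]:\bbx^{(i)}\notin\mathrm{Conv}(\{\bbx^{(j)}:j\neq i\})\,\}\bigr|$. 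Any index $i$ counted by $V_t$ corresponds to a sample value that appears uniquely in $\bbx^{(1)},\dots,\bbx\utt$ and that is an extreme point of $\mathrm{Conv}\{\bbx^{(1)},\dots,\bbx\utt\}$; therefore $V_t$ is bounded above by the number of extreme points of this convex hull.

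\textbf{Step 3 (Geometric bound via edges).} By Lemma~\ref{lem:edge}, every observed optimum $\bbx^{(i)}$ lies on some edge in $E_\cP$. The observed solutions on a common edge are collinear, so their convex hull restricted to that edge is a line segment whose only extreme points are its two endpoints. Thus each edge of $\cP$ contributes at most $2$ extreme points to $\mathrm{Conv}\{\bbx^{(1)},\dots,\bbx\utt\}$, which gives $V_t\leq 2|E_\cP|$. Substituting yields $\Pr[\text{mistake at round }t]\leq 2|E_\cP|/t$, and summing,
$$\mathbb{E}\!\left[\sum_{t=1}^{T}\mathbf{1}[\hat{\bbx}\utt\neq\bbx\utt]\right] \;\leq\; \sum_{t=1}^{T}\frac{2|E_\cP|}{t} \;=\; O\bigl(|E_\cP|\log T\bigr).$$

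The main technical care is in Step~2: to move from the event ``$\bbx\utt\notin\mathrm{Conv}\{\bbx^{(1)},\dots,\bbx^{(t-1)}\}$'' to the symmetric statistic $V_t$, one must use exchangeability of the \emph{joint} sequence, not just the marginals, and verify the equivalence between $\bbx^{(i)}\notin\mathrm{Conv}(\{\bbx^{(j)}:j\neq i\})$ and ``$\bbx^{(i)}$ is a uniquely appearing extreme point.'' The remaining arguments are structural (from the algorithm) and combinatorial (edges of $\cP$), with no heavy computation.
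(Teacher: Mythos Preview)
Your proposal is correct and follows essentially the same approach as the paper: the paper also (i) argues that a mistake at round $t$ forces $\bbx\utt\notin\mathrm{Conv}\{\bbx^{(1)},\dots,\bbx^{(t-1)}\}$, (ii) uses an exchangeability/leave-one-out argument to bound this probability by $2|E_\cP|/t$ via the observation that each edge contributes at most two extreme points, and (iii) sums the harmonic series. The only organizational difference is that the paper packages step~(ii) and the geometric count as a separate lemma (Lemma~\ref{lem:stoch_case}), abstracting away the LP structure, whereas you inline it; your explicit handling of duplicates (uniquely appearing extreme points) is slightly more careful than the paper's phrasing.
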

\ifnum\final=0
To prove Theorem~\ref{thm:mistakebdd}, first in Lemma~\ref{lem:stoch_case} we bound the
probability that the solution observed at day $t$ falls outside of the
convex hull of the previously observed solutions. 
This is the only event that can cause $\learnhull$ to make a mistake.
In Lemma~\ref{lem:stoch_case}, we abstract away the fact that the point observed at each day 
is the solution to some optimization problem. 
 \begin{lemma}
   Let $\cP$ be a polytope and $\cD$
   a distribution over points on $E_{\cP}$. Let 
$X=\{x_1,
   \ldots, x_{t-1}\}$
    be $t-1$ i.i.d. draws from $\cD$ and $x_t$ an additional independent draw from $\cD$. Then 
$
   \Pr[x_t\not\in \mathrm{Conv}(X)] \leq 2|E_{\cP}|/t
   $
   where the probability is taken over the draws of points 
$x_1,\ldots,x_t$
from $\cD$. 
   \label{lem:stoch_case}
 \end{lemma}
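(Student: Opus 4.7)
The plan is to reduce the problem to a one-dimensional question on each individual edge and then apply a standard exchangeability argument. The key geometric observation is that each edge $e \in E_{\cP}$ is a \emph{face} of $\cP$: if a point $x$ in the relative interior of $e$ is written as a convex combination of points in $\cP$, then every point in that combination must itself lie on $e$. Consequently, if $x_t$ lies on edge $e$, we have $x_t \in \mathrm{Conv}(X)$ if and only if $x_t \in \mathrm{Conv}(X \cap e)$. (Vertices, where multiple edges meet, can be handled by assigning them arbitrarily to one of the incident edges, or absorbed into a minor constant.)

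Next, since $e$ is a one-dimensional segment, I would parameterize it by a scalar coordinate along its direction. Then $\mathrm{Conv}(X \cap e)$ is simply the interval spanned by the smallest and largest scalar coordinates among points of $X$ on $e$. Therefore, the event $x_t \notin \mathrm{Conv}(X)$ (given $x_t \in e$) is exactly the event that $x_t$ attains either the maximum or the minimum coordinate on $e$ among the full collection $\{x_1, \dots, x_t\} \cap e$. Let $N_i$ be the indicator that $x_i$ is one of the (at most) two extreme points along its edge among $\{x_1,\ldots,x_t\}$; then $\mathbf{1}[x_t \notin \mathrm{Conv}(X)] \le N_t$.

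The final step is symmetry. Because $x_1,\ldots,x_t$ are i.i.d.\ draws from $\cD$, they are exchangeable, so $\Exp{}{N_i}$ is the same for every $i \in [t]$. Summing over $i$, the total number of extreme points $\sum_{i=1}^t N_i$ is deterministically bounded by $2|E_{\cP}|$ (at most two extremes per edge), which yields
\[
t \cdot \Pr[x_t \notin \mathrm{Conv}(X)] \;\le\; t \cdot \Exp{}{N_t} \;=\; \Exp{}{\textstyle\sum_{i=1}^t N_i} \;\le\; 2|E_{\cP}|,
\]
and the stated bound $\Pr[x_t \notin \mathrm{Conv}(X)] \le 2|E_{\cP}|/t$ follows after dividing by $t$.

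The main obstacle I anticipate is purely technical rather than conceptual: pinning down the face argument in the first step so that boundary cases (points that happen to be vertices of $\cP$, or ties in the scalar coordinate along an edge) do not inflate the count of extreme points beyond $2|E_{\cP}|$. This can be handled either by breaking ties consistently (e.g., by index) so that each edge contributes exactly two ``extreme'' indicators, or by observing that any tie only reduces the set of ``extreme'' points, which only helps the bound. Once that is settled, the symmetry/exchangeability step is essentially a one-line counting argument.
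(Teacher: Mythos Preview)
Your proposal is correct and follows essentially the same approach as the paper: both reduce to the per-edge observation that at most two sampled points can be extreme on each edge, and then exploit the exchangeability of $x_1,\ldots,x_t$ to turn the deterministic count $\sum_i N_i \le 2|E_\cP|$ into the bound $\Pr[x_t\notin\mathrm{Conv}(X)]\le 2|E_\cP|/t$. The only cosmetic difference is that the paper phrases the symmetry step as ``pick a uniformly random index $i$ and compute $\Pr[x_i\notin\mathrm{Conv}(X'\setminus\{x_i\})]$'' rather than via your indicator sum, and it does not spell out the face argument (indeed only the trivial direction $\mathrm{Conv}(X\cap e)\subseteq\mathrm{Conv}(X)$ is needed for the inequality $\mathbf{1}[x_t\notin\mathrm{Conv}(X)]\le N_t$).
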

\fi

Finally in Theorem~\ref{thm:stoch} we convert the bound on the expected number of mistakes 
of $\learnhull$ in Theorem~\ref{thm:mistakebdd} 
to a high probability bound. \footnote{$\learnedge$ fails to give any non-trivial mistake bound
in the adversarial setting.}

\begin{theorem}
\label{thm:stoch}
\rynote{We never use Known Objective Stochastic Problem, so just deleted it here.}
There exists a deterministic procedure 
such that after $T = O\left(|E_{\cP}|\log\left(1/\delta\right)\right)$ days, the probability (over the randomness of the additional constraint) that the procedure makes a mistake on day $T+1$ is at most $\delta$ for any $\delta \in (0,1/2)$. 

\end{theorem}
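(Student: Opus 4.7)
The plan is to convert the expected mistake bound of Theorem~\ref{thm:mistakebdd} into a high-probability statement on the per-day mistake rate by exploiting the monotonicity of the hull $\cC^{(t)}$ and applying confidence amplification.

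The starting point is to define, for any set $\cC \subseteq \cP$, the quantity $q(\cC) := \Pr_{\cN \sim \cD}[\bbx(\cN) \notin \cC]$, where $\bbx(\cN) = \myargmax_{\bbx \in \cP \cap \cN} \bbc \cdot \bbx$. Because $\learnhull$ only predicts points in its current hull and $\bbx(\cN)$ is the true optimum, $q(\cC^{(t)})$ is exactly the conditional probability (given the history through day $t$) that a mistake is made on day $t+1$; in particular, $\Pr[\text{mistake on day } T+1] = \mathbb{E}[q(\cC^{(T)})]$. The crucial structural observation is that $\cC^{(t)}$ is monotonically nondecreasing, and hence $q(\cC^{(t)})$ is nonincreasing in $t$ almost surely.

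The averaging step uses this monotonicity pointwise: $T \cdot q(\cC^{(T)}) \leq \sum_{t=1}^T q(\cC^{(t-1)})$. Taking expectations and noting that the right-hand side is precisely the expected total number of mistakes in the first $T$ days, Theorem~\ref{thm:mistakebdd} yields $\mathbb{E}[q(\cC^{(T)})] \leq O(|E_\cP|\log T / T)$. Thus a single run of $\learnhull$ already drives the per-day mistake probability to zero.

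To sharpen the dependence on $\delta$ to $\log(1/\delta)$, the plan is to boost confidence by running $\learnhull$ in $k = \Theta(\log(1/\delta))$ independent phases over disjoint blocks of days, producing hulls $\cC_1, \ldots, \cC_k$, and then predicting on day $T+1$ using $\cC := \mathrm{Conv}\left(\bigcup_i \cC_i\right)$. By monotonicity of $q$, $q(\cC) \leq \min_i q(\cC_i)$. Applying Markov's inequality to each phase (which gives $q(\cC_i) \leq 2\,\mathbb{E}[q(\cC_i)]$ with probability $\geq 1/2$) and exploiting independence across phases, the probability that every $q(\cC_i)$ exceeds twice the base expectation is at most $2^{-k} \leq \delta$. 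Combining this failure probability with the base expectation from step~2, the overall probability of a mistake on day $T+1$ is at most $\delta$, which can be absorbed into the constant by rescaling $\delta$.

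The main obstacle is the concentration step: the per-day mistake indicator random variables are not independent across days because $\cC^{(t)}$ depends on the entire past, so a direct Chernoff/Hoeffding argument is unavailable. The amplification via disjoint phases is the natural workaround since it manufactures genuine independence at the cost of a $\log(1/\delta)$ factor, while preserving the monotone structure of $q$ that makes the union hull at least as good as any individual phase hull. A secondary technicality is verifying that the per-phase analysis applies, which follows because within each phase the observations are i.i.d.\ from $\cD$ and hence Lemma~\ref{lem:stoch_case} and Theorem~\ref{thm:mistakebdd} apply verbatim.
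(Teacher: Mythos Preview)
Your amplification step does not deliver the claimed $T = O(|E_\cP|\log(1/\delta))$ bound. Taking the union of hulls gives $q(\cC) \leq \min_i q(\cC_i)$, and your Markov-plus-independence argument shows that with probability at least $1-2^{-k}$ this minimum is at most $2\,\mathbb{E}[q(\cC_i)]$. The overall mistake probability is therefore at most $2^{-k} + 2\,\mathbb{E}[q(\cC_i)]$. But with $k = \Theta(\log(1/\delta))$ phases packed into $T = O(|E_\cP|\log(1/\delta))$ total days, each phase gets only $\Theta(|E_\cP|)$ days, so by Lemma~\ref{lem:stoch_case} (or your averaging bound) $\mathbb{E}[q(\cC_i)] = \Theta(1)$ --- a constant, not $O(\delta)$ --- and the second term does not vanish. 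To force $\mathbb{E}[q(\cC_i)] \leq \delta$ you would need phase length $\Omega(|E_\cP|/\delta)$, which yields $T = \Omega(|E_\cP|\log(1/\delta)/\delta)$, off by a factor of $1/\delta$ from the theorem. The underlying issue is structural: a minimum over $k$ independent nonnegative random variables with mean $\mu$ concentrates near $\mu$, not near zero; Markov can only certify that \emph{some} copy is at most $2\mu$ with high probability, never that some copy is at most $\delta \ll \mu$.

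The paper uses a qualitatively different aggregation: \emph{majority vote} over $k = \Theta(\log(1/\delta))$ independently trained copies of $\learnhull$, each run for $8|E_\cP|$ days so that $\mathbb{E}[Z_i] \leq 1/4$. After a Markov step bounding $\Pr[Z_i \geq 3/4] \leq 1/3$, the independence of the training sets lets a Chernoff bound drive the probability that at least half the instances are wrong down to $\exp(-\Omega(k)) \leq \delta$. The crucial difference is that majority voting can convert a constant per-instance error rate into an exponentially small aggregate error rate; predicting with the union hull (i.e., taking a minimum over the $q(\cC_i)$) cannot.
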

\sj{better, now? Also, can somebody verify that my proof is correct?}

\fi

\section{The Known Constraints Problem}\label{sec:known_constr}
We now consider the~\emph{Known Constraints Problem} in which the
learner observes the changing constraint polytope $\cP\utt$ at each day, but
does not know the changing objective function which we assume to be written as
$ \bbc\utt = \sum_{i \in S\utt} \bbv^i$, where $\{\bbv^i \}_{i \in
  [n]}$ are fixed but unknown.  Given $\cP\utt$ and the
subset $S\utt \subseteq [n]$, the learner must make a prediction
$\hat\bbx\utt$ on each day.  \sj{should we introduce the notion $[n]$
  or avoid using it as in the case in the introduction.}  \rynote{I
  put it in intro}\ifnum\final=0
\fi
Inspired by~Bhaskar et al.~\cite{BLSS14}, we use the Ellipsoid algorithm to learn the coefficients $\{\bbv^i \}_{i \in
  [n]}$, and show that the mistake bound of the resulting algorithm is bounded by the (polynomial) running time of the Ellipsoid.  
  We use $V \in \R^{d\times n}$ to denote the matrix whose columns are $\bbv^i$ and make the following assumption on $V$.
\begin{assumpt}
  Each entry in $V$ can be written with $N$ bits of precision. Also w.l.o.g. $||V||_F \leq 1$.
\label{assumpt:obj_precision}
\end{assumpt}

Similar to Section~\ref{sec:known_obj} 
we assume the coordinates of $\cP\utt$'s vertices can be written
with finite precision.\footnote{We again point out that this is implied if the halfspaces defining the
polytope are described with finite precision~\cite{GLS93}.}
\begin{assumpt}
The coordinates of each vertex of $\cP\utt$ 
can be written with $N$ bits of precision.
\label{assumpt:precision2}
\end{assumpt}


We first observe that the coefficients of the objective function represent a point that is guaranteed to lie in a region $\cF$ (described below) which may be written as the intersection of possibly infinitely many halfspaces. Given a subset $S \subseteq [n]$ and a polytope $\cP$, let $\bbx^{S,\cP}$ denote the optimal solution to the instance defined by $S$ and $\cP$. Informally, the halfspaces defining $\cF$ ensure that for any problem instance defined by arbitrary choices of $S$ and $\cP$, the objective value of the \emph{optimal} solution $x^{S,\cP}$ must be at least as high as the objective value of any \emph{feasible} point in $\cP$.
Since the convergence rate of the Ellipsoid algorithm depends on the precision to which constraints are specified, we do not in fact consider a hyperplane for every feasible solution but only for those solutions that are vertices of the feasible polytope $\cP$. This is not a relaxation, since LPs always have vertex-optimal solutions.

We denote the set of all vertices of polytope $\cP$ by vert$(\cP)$, and the set of polytopes $\cP$ satisfying Assumption \ref{assumpt:precision2} by $\Phi$.  We then define $\cF$ as follows:
\ifnum\cm=1
\begin{align*}
\cF = \left\{ W = (\bbw^1,\ldots,\bbw^n) \in \R^{n\times d}\mid \forall S \subseteq [n], \forall \cP  \in \Phi, \sum_{i \in S} \bbw^i\cdot\left(\bbx^{S,\cP} - \bbx\right) \geq 0, \forall \bbx \in \text{vert}(\cP) \right\}
\label{eq:obj_feasible}
\end{align*}
\else
\begin{equation}
\cF = \left\{ W = (\bbw^1,\ldots,\bbw^n) \in \R^{n\times d}\mid \forall S \subseteq [n], \forall \cP  \in \Phi, \sum_{i \in S} \bbw^i\cdot\left(\bbx^{S,\cP} - \bbx\right) \geq 0, \forall \bbx \in \text{vert}(\cP) \right\}
\label{eq:obj_feasible}
\end{equation}
\fi
The idea behind our $\learnell$ algorithm is that we will run a copy of the Ellipsoid algorithm with variables $\bbw \in \R^{d\times n}$, as if we were solving the feasibility LP defined by the constraints defining $\cF$. We will always predict according to the centroid 
of the ellipsoid maintained by the Ellipsoid algorithm (i.e. its candidate solution). Whenever a mistake occurs, we are able to find one of the constraints that define $\cF$ such that our prediction violates the constraint -- exactly what is needed to take a step in solving the feasibility LP. Since we know $\cF$ is non-empty (at least the true objective function $V$ lies within it) we know that the LP we are solving is feasible. Given the polynomial convergence time of the Ellipsoid algorithm, this gives a polynomial mistake bound for our algorithm.

The Ellipsoid algorithm will generate a sequence of ellipsoids with decreasing volume such that each one contains feasible region $\cF$.
Given the ellipsoid $\cE\utt$ at day $t$, $\learnell$ uses the centroid
of $\cE\utt$ as its hypothesis for the objective function $W\utt = \left( (\bbw^1)\utt, \ldots, (\bbw^n)\utt\right) $.
Given the subset $S\utt$ and polytope $\cP\utt$, $\learnell$ predicts
\ifnum\cm=1
$
\hat\bbx\utt \in \myargmax_{\bbx \in \cP\utt} \{ \sum_{i \in S\utt}  (\bbw^i)\utt \cdot \bbx\}.
\label{eq:obj_guess}
$
\else
\begin{equation}
\hat\bbx\utt \in \myargmax_{\bbx \in \cP\utt} \{ \sum_{i \in S\utt}  (\bbw^i)\utt \cdot \bbx\}.
\label{eq:obj_guess}
\end{equation}
\fi
When a mistake occurs, $\learnell$ finds the hyperplane 
\ifnum\cm=1
$
\cH\utt = \left\{W = (\bbw^1,\ldots, \bbw^n) \in \R^{n\times d} : \sum_{i \in S\utt}\bbw^i \cdot (\bbx\utt - \hat\bbx\utt ) > 0 \right\}
\label{eq:sep_hyp}
$
\else
\begin{equation}
\cH\utt = \left\{W = (\bbw^1,\ldots, \bbw^n) \in \R^{n\times d} : \sum_{i \in S\utt}\bbw^i \cdot (\bbx\utt - \hat\bbx\utt ) > 0 \right\}
\label{eq:sep_hyp3}
\end{equation}
\fi
that separates the centroid of the current ellipsoid (the current candidate objective) from $\cF$.

After the update, we use the Ellipsoid algorithm to compute 
the minimum-volume ellipsoid $\cE\uttf$ that contains
$\cH\utt \cap \cE\utt$. On day $t+1$, $\learnell$ sets
$W\uttf$ to be the centroid of
$\cE\uttf$.  \ifnum\final=1 We call this learning procedure
$\learnell$ with a formal description in the full version. \fi
\ifnum\cm=0 
\ifnum\alt=0
The above procedure is formalized in Algorithm~\ref{alg:learnellip}.
\begin{algorithm}[h]
 \begin{algorithmic}
 \Procedure{$\learnell$}{$\cA$} \hspace{49mm} $\triangleright$ Against adversary $\cA$\\
	\State $\cI^{(1)} \gets \left(\cE^{(1)} = \{\bbz \in \R^{n\times d} : ||\bbz||_F\leq 2\}, W^{(1)} = \cent(\cE^{(1)})\right)$
	\hspace{10mm} $\triangleright$ Initialize
	\For{$t = 1 \ldots $}\\
	 \hspace{10mm} Given $S\utt, \cP\utt$, set $\hat\bbx\utt$ as in (\ref{eq:obj_guess}).
	 \hspace{56mm} $\triangleright$ Predict
	 \If{$\bbx\utt \neq \hat\bbx\utt$}\\
		\hspace{15mm} set $\cH\utt$ as in \eqref{eq:sep_hyp3}.
		\hspace{73mm} $\triangleright$ Update\\
		\hspace{15mm}$\cE\uttf \gets \ellip(\cH\utt,\cE\utt), W^{\uttf}=\cent(\cE\uttf).$\\
		\hspace{15mm}$\cI\uttf = \left(\cE\uttf, W^{\uttf}\right).$
		\Else\\
		\hspace{15mm}$\cI\uttf \gets \cI\utt.$
\EndIf

\EndFor
\EndProcedure
\end{algorithmic}
\caption{Learning with Known Constraints ($\learnell$)}
\label{alg:learnellip}
\end{algorithm}
\fi
\fi

\ifnum\final=0
\fi
We  left the procedure used to solve the LP in the prediction rule of $\learnell$ unspecified.  To simplify our analysis, 
we use a specific LP solver to obtain a prediction $\hat\bbx\utt$ which is a vertex of $\cP\utt$.
\ifnum\cm=0 We defer all the proofs of this section to Appendix~\ref{sec:missingproof4}.\fi
\begin{theorem}[Theorem 6.4.12 and Remark 6.5.2~\cite{GLS93}]
There exists a LP solver that runs in time polynomial in the length of its input and returns an exact solution that is a vertex of $\cP\utt$.
\label{thm:lin_solve}
\end{theorem}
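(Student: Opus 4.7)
The plan is to prove this classical result in three stages: obtain an approximate optimal point via the Ellipsoid algorithm, round it to an exact rational optimum, and then extract a vertex from the optimal face. Since by Assumption~\ref{assumpt:precision2} the constraints of $\cP\utt$ have encoding length bounded by a polynomial in $N$ and $d$, the standard encoding-length bounds on LPs apply: every vertex of $\cP\utt$ has rational coordinates with denominators bounded by $2^{\poly(N,d)}$, and the optimal value of any linear objective is rational with similarly bounded encoding length.

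First I would run the Ellipsoid algorithm on the feasibility problem defined by $\cP\utt$ together with a binary search (or a single call to the optimization version) on the objective value. This yields, in time polynomial in the input length, a point $\tilde{\bbx} \in \cP\utt$ whose objective value is within $2^{-\poly(N,d)}$ of the optimum. The precision is chosen so that $\tilde{\bbx}$ is closer to the unique optimal vertex (or to the optimal face) than to any suboptimal vertex; this uses the fact that distinct vertex objective values differ by at least $2^{-\poly(N,d)}$.

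Second I would convert $\tilde{\bbx}$ to an exact rational optimum. The standard tool is simultaneous Diophantine approximation applied to the coordinates of $\tilde{\bbx}$: since the true optimal vertex has bounded denominators, once $\tilde{\bbx}$ is close enough, continued-fraction rounding (or the simultaneous version based on basis reduction) recovers a rational point $\bbx^\star$, and one verifies algebraically that it is feasible and optimal by plugging into $A\bbx \le \bbb$ and checking the objective. This is the standard route used in GLS to go from approximate to exact LP solutions.

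Third, and this is where the main obstacle lies, I would extract a vertex from the optimal face. Given an exact optimal $\bbx^\star$ (not necessarily a vertex), I would iterate the following: collect the set $T$ of constraints tight at $\bbx^\star$; if the rank of the tight submatrix is $d$, then $\bbx^\star$ is a vertex and we are done; otherwise pick any nonzero direction $\bd$ in the kernel of the tight submatrix, and move along $\pm \bd$ until a new constraint becomes tight (this step size is found by a simple ratio test over the non-tight constraints and is rational of polynomial encoding length). The objective is constant along $\bd$ because $\bd$ must be orthogonal to the objective on the optimal face (otherwise moving would strictly improve it, contradicting optimality). Each iteration strictly increases the rank of the tight constraint matrix, so after at most $d$ iterations we reach a vertex. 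The running time of each step is polynomial in the input length, and coefficient blowup is controlled by the same encoding-length bounds invoked above, giving the overall polynomial runtime claimed in the theorem.
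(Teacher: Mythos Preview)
The paper does not prove this theorem at all: it is quoted as an external result from Gr\"otschel, Lov\'asz, and Schrijver (Theorem~6.4.12 and Remark~6.5.2 of~\cite{GLS93}) and used as a black box. Your three-stage outline---Ellipsoid for an approximate optimum, rounding to an exact rational optimum, purification to a vertex---is essentially the classical GLS route, so you are reconstructing the cited source rather than offering an alternative to anything in the present paper.

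One genuine gap is worth flagging in your step~2. You propose to round the approximate point $\tilde{\bbx}\in\R^d$ directly via simultaneous Diophantine approximation and then ``verify algebraically that it is feasible and optimal.'' This works only when the optimum is attained at a unique vertex; if the optimal face has positive dimension---which can happen here, since the objective $W\utt$ used in the prediction step of $\learnell$ carries no non-degeneracy guarantee (note the ``$\in$'' in \eqref{eq:obj_guess})---there is no single low-denominator target to round to, and the rounded point need not land on the optimal face or even in $\cP\utt$. The standard fix, and what GLS actually do, is to round the one-dimensional optimal \emph{value} first (continued fractions suffice for a single rational), append the equality constraint to cut down to the exact optimal face, obtain any rational point on it via a feasibility call, and only then run your purification step~3. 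With that adjustment your outline is correct and matches the cited argument.
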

In Theorem~\ref{thm:unknwon-obj}, we show that the number of mistakes made by $\learnell$ is at most the number 
of updates that the Ellipsoid algorithm makes before it finds a point in $\cF$ and the number of updates of the Ellipsoid algorithm can be bounded 
by well-known results from the literature on LP. 
\begin{theorem}
\label{thm:unknwon-obj}
The total number of mistakes and the running time of $\learnell$ in the Known Constraints Problem is at most  $\poly(n,d,N)$.

\end{theorem}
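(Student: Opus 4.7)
The plan is to bound the number of mistakes by the number of iterations of the Ellipsoid algorithm, and then invoke the standard polynomial bound on that quantity. The key structural claim is that every mistake produces a hyperplane $\cH\utt$ that properly separates the current centroid $W\utt$ from $\cF$, so each mistake corresponds to exactly one genuine Ellipsoid ``cut.'' To verify this, note that by our LP solver (Theorem~\ref{thm:lin_solve}) and Assumption~\ref{assumpt:vertex} both $\hat\bbx\utt$ and $\bbx\utt$ are vertices of $\cP\utt$, and by optimality of $\hat\bbx\utt$ under $W\utt$ we have $\sum_{i\in S\utt}(\bbw^i)\utt\cdot(\bbx\utt-\hat\bbx\utt)\le 0$, so $W\utt\notin \cH\utt$. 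On the other hand, the true objective $V$ makes $\bbx\utt$ the LP optimum over $\cP\utt$, so $\sum_{i\in S\utt}\bbv^i\cdot(\bbx\utt-\hat\bbx\utt)\ge 0$, and in fact this is a constraint that defines $\cF$ (taking $S=S\utt$, $\cP=\cP\utt$, $\bbx=\hat\bbx\utt$). Hence $V\in\cF\subseteq \cH\utt$, and $\cH\utt$ is a legal separating hyperplane, which is exactly the object the Ellipsoid algorithm needs.

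Next, I would invoke the standard volume-reduction guarantee: each Ellipsoid update shrinks $\mathrm{vol}(\cE\utt)$ by a factor of at most $e^{-1/(2(nd+1))}$ in the ambient space $\R^{n\times d}$. Since we initialize $\cE^{(1)}$ to be the Frobenius-norm ball of radius $2$ (which contains $V$ by Assumption~\ref{assumpt:obj_precision}), after $T$ mistakes we have $\mathrm{vol}(\cE^{(T+1)}) \le \mathrm{vol}(\cE^{(1)})\cdot e^{-T/(2(nd+1))}$. Because the cuts are valid and $V\in\cF$ always lies in $\cH\utt$, the current ellipsoid $\cE\utt$ contains $V$ for every $t$; more importantly, every point in an entire small neighborhood of $V$ is also never cut off, which I argue next.

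The hard step---and the one that governs the mistake bound---is lower bounding the radius of a ball around $V$ that is never cut by any $\cH\utt$. Each defining constraint of $\cF$ has the form $\sum_{i\in S}\bbw^i\cdot(\bbx^{S,\cP}-\bbx)\ge 0$ with $\bbx^{S,\cP},\bbx\in\mathrm{vert}(\cP)$ for some $\cP\in\Phi$. By Assumption~\ref{assumpt:precision2} the coordinates of these vertices are integer multiples of $2^{-N}$, so the linear form evaluated at $V$ (whose entries are also $N$-bit by Assumption~\ref{assumpt:obj_precision}) is a rational with denominator at most $2^{\poly(n,d,N)}$. Uniqueness of the LP optimum (Assumption~\ref{assumpt:vertex}) gives strict positivity whenever $\bbx\ne\bbx^{S,\cP}$, so the value is at least $2^{-\poly(n,d,N)}$. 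Since the normal vector of each such constraint has Frobenius norm at most $|S|\cdot\|\bbx^{S,\cP}-\bbx\|_2 = 2^{\poly(n,d,N)}$, there is a Frobenius-ball $B$ of radius $r = 2^{-\poly(n,d,N)}$ centered at $V$ such that $B$ lies in every halfspace $\cH\utt$ the algorithm can produce. Hence $B\subseteq \cE\utt$ for all $t$, giving $\mathrm{vol}(\cE\utt)\ge \mathrm{vol}(B) = 2^{-\poly(n,d,N)}$.

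Combining the lower bound $\mathrm{vol}(\cE^{(T+1)})\ge 2^{-\poly(n,d,N)}$ with the exponential decay $\mathrm{vol}(\cE^{(T+1)})\le \mathrm{vol}(\cE^{(1)})e^{-T/(2(nd+1))}$ yields $T\le \poly(n,d,N)$, which is the claimed mistake bound. The per-day running time is polynomial because the prediction step invokes the LP solver of Theorem~\ref{thm:lin_solve} on an input of size $\poly(n,d,N)$, and the update step runs one iteration of the Ellipsoid algorithm, which is also $\poly(n,d,N)$. The main obstacle in this argument is the volume lower bound on the neighborhood of $V$; once that is in place, the rest follows from textbook Ellipsoid analysis.
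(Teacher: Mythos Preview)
Your argument is correct and follows the same strategy as the paper: each mistake yields a valid Ellipsoid cut, and bounded bit-complexity forces termination in $\poly(n,d,N)$ steps. The paper's proof is terser: it observes that both $\bbx\utt$ and $\hat\bbx\utt$ are vertices of $\cP\utt$ (the latter by Theorem~\ref{thm:lin_solve}), so each separating hyperplane has coefficients of bit-length $O(dN)$, and then invokes the black-box Ellipsoid guarantee (Theorem~\ref{thm:ell_bits}) directly. You instead unpack that black box via the standard volume-decay / inner-ball argument, using Assumption~\ref{assumpt:vertex} to get a strictly positive margin at $V$ and hence a surviving ball of radius $2^{-\poly(n,d,N)}$. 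Both routes arrive at the same bound; yours is more self-contained, the paper's is shorter.

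One small imprecision: you write $V\in\cF\subseteq\cH\utt$, but $\cF$ is defined with a weak inequality while $\cH\utt$ is strict, so only $\cF\subseteq\overline{\cH\utt}$ holds in general. This is harmless for your proof, since you separately establish (via uniqueness and the bit-precision gap) that $V$ lies strictly inside $\cH\utt$ with margin $2^{-\poly(n,d,N)}$, which is exactly what the inner-ball argument needs.
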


\newpage
\bibliographystyle{acm}
\bibliography{./references}

\newpage
\appendix
\section{Polynomial Mistake Bound with Exponential Running Time}
\label{sec:poly-mistake}
In this section we give a simple randomized algorithm for the unknown constraints problem, that in expectation makes a number of mistakes that is only linear in the dimension $d$, the number of rows in the unknown constraint matrix $A$ (denoted by $m$), and the bit precision $N$, but which requires exponential running time. When the number of rows is large, this can represent an exponential improvement over the mistake bound of $\learnedge$, which is linear in the number of \emph{edges} on the polytope $\cP$ defined by $A$.
This algorithm which we describe shortly is a randomized variant of the well known halving algorithm~\cite{Littlestone87}. 
We leave it as an open problem whether the mistake bound achieved by this algorithm can also be achieved by a computationally efficient algorithm.

Let $\cK$ be the hypothesis class of all polytopes formed by $m$ constraints in $d$ dimensions, such that each entry of each constraint can be written as a multiple of $1/2^N$ (and without loss of generality, up to scaling, has absolute value at most 1).  We then have
$$
|\cK| = 2^{O(d m N)}.
$$
We write $\cK\utt$ to denote the polytopes that are consistent with the examples and solutions we have seen up to and including day $t$.  Note that $|\cK\utt| \geq 1$ for every $t$ because there is some polytope (specifically the true unknown polytope $\cP$) that is consistent with all the optimal solutions.  On each day $t$ we keep track of consistent polytopes and more specifically update the set of consistent polytopes by \ar{The optimization below looked wrong -- it didn't mention the new constraint each day. I changed it from optimizing over $\cP$ to $\cP\utt$.}\rynote{Good!}
\begin{equation}
\cK\uttf = \left\{\cP \in \cK\utt \mid \bbx\utt \in \myargmax_{\bbx \in \cP\cap \cN\utt} \bbc\cdot \bbx \right\},
\label{eq:cons_poly}
\end{equation}
where $\cN\utt$ is the new constraint on day $t$. The formal description of the algorithm, $\texttt{FCP}$, is presented in Algorithm~\ref{alg:FCP}.
To predict at each day, $\texttt{FCP}$ selects a polytope $\hat\cP\utt$ from $\cK\utt$ uniformly at random and guesses $\hat\bbx\utt$ that solves the following LP:
$\max_{\bbx\in\hat\cP\utt\cap\cN\utt}\bbc\cdot\bbx$.
\begin{algorithm}[h]
 \begin{algorithmic}
 \Procedure{$\texttt{FCP}$}{}
 \State $\cK^{(1)} = \cK$.  \hspace{95mm} $\triangleright$ Initialize
 \For{$t = 1 \ldots $}
 \State Choose $\hat\cP\utt \in \cK\utt$ uniformly at random.
 \State Guess $\hat\bbx\utt \in \myargmax_{\bbx \in \hat\cP\utt\cap\cN\utt} \bbc\cdot\bbx$. \hspace{54mm}$\triangleright$ Predict
 \State Observe $\bbx\utt$ and set $\cK\uttf$ as in \eqref{eq:cons_poly}.
 \EndFor
\EndProcedure
\end{algorithmic}
\caption{Find Consistent Polytope $\texttt{FCP}$}
\label{alg:FCP}
\end{algorithm}

We now bound the expected number of mistakes that $\texttt{FCP}$ makes.
\begin{theorem}
The expected number of mistakes that $\texttt{FCP}$ makes is at most $\log(|\cK|) = O(dmN)$, where the expectation is over the 
randomness of $\texttt{FCP}$ and possible randomness of the adversary.
\end{theorem}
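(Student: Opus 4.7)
My plan is the standard randomized halving argument via a potential function on $|\cK\utt|$. First I would observe that $\cK\utt$ is updated \emph{deterministically} from $\cK\uttb$ using the observation $(\cN\utt, \bbx\utt)$, which is the optimal solution under the \emph{true} polytope and therefore does not depend on the algorithm's random draw $\hat\cP\utt$. In particular, $\cK\utt$ always contains $\cP$ (the true polytope), so $|\cK\utt| \ge 1$ for all $t$, and the update rule $\eqref{eq:cons_poly}$ removes from $\cK\utt$ exactly the set of polytopes that are inconsistent with the new example.

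Next I would define, for each day $t$, the ``bad'' set
\[
B\utt = \left\{\cP' \in \cK\utt : \myargmax_{\bbx \in \cP'\cap \cN\utt}\bbc\cdot\bbx \neq \bbx\utt\right\},
\]
and let $p\utt = |B\utt|/|\cK\utt|$. Two simple observations drive the proof: (i) conditional on the history, $\texttt{FCP}$ makes a mistake on day $t$ with probability exactly $p\utt$, because a mistake occurs iff the uniformly drawn $\hat\cP\utt$ lies in $B\utt$; and (ii) $|\cK\uttf| = |\cK\utt| - |B\utt| = (1-p\utt)\,|\cK\utt|$, because the update removes precisely those polytopes in $B\utt$.

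Now I would use $\Phi\utt := \log|\cK\utt|$ as a potential. By (ii), $\Phi\utt - \Phi\uttf = -\log(1-p\utt)$. Using $-\log(1-p)\ge p$ for $p\in[0,1)$ together with (i),
\[
\mathbb{E}[\mathbf{1}[\text{mistake on day }t]\mid \text{history}] = p\utt \le \Phi\utt - \Phi\uttf .
\]
Summing over $t$ and taking total expectation, the sum telescopes and is bounded by $\Phi^{(1)} - \lim_{t\to\infty}\Phi\utt \le \log|\cK| - 0 = \log|\cK|$. Plugging in $|\cK| = 2^{O(dmN)}$ yields the claimed $O(dmN)$ bound.

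\paragraph{Main obstacle.}
There is no serious technical obstacle; the only mildly subtle point is handling an adaptive (possibly randomized) adversary. The key fact that saves us is that the observation $\bbx\utt$ is determined by the true $\cP$ and the adversary's $\cN\utt$, not by $\texttt{FCP}$'s internal randomness, so $\cK\utt$ and hence $p\utt$ are functions of the history alone. Conditioning on the full history at each step and using the tower property then makes the telescoping argument go through pathwise, and the claim follows.
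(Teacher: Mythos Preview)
Your proposal is correct and is essentially the same argument as the paper's: both identify that the conditional probability of a mistake on day $t$ equals $1 - |\cK\uttf|/|\cK\utt|$, and both convert this into a telescoping bound on $\sum_t p\utt$ via the inequality $-\log(1-p)\ge p$ (the paper phrases it as $(1-x)\le e^{-x}$ applied to a product, you phrase it as a potential drop on $\log|\cK\utt|$). Your treatment of the adaptive adversary via conditioning on the full history is actually a bit more explicit than the paper's, but the underlying mathematics is identical.
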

\begin{proof}
First note that the probability that $\texttt{FCP}$ \emph{does not} make a mistake at day $t$ can be expressed as 
$|\cK\uttf|/|\cK\utt|$.
This is because if $\texttt{FCP}$ makes a mistake at day $t$, it must have selected a polytope that will be eliminated at the next day (also note that $\texttt{FCP}$ selects its polytope from among the consistent set uniformly at random).
Now consider the product of these probabilities over all days $t=1\ldots T$.
$$
\prod_{t = 1}^T (1 - \Prob{}{\text{Mistake at day $t$}} ) = \prod_{t = 1}^T \frac{|\cK\uttf|}{|\cK\utt|} =  \frac{|\cK^{(T+1)} |}{|\cK^{(1)} |}.
$$
Finally, note that the expected number of mistakes is the sum of probabilities of making mistakes over all days.
Using the inequality $(1-x) \leq e^{-x}$ for every $x\in[0,1]$ and rearranging terms we get
$$
\sum_{t=1}^T \Prob{}{\text{Mistake at day $t$}} \leq \log\left( \frac{|\cK^{(1)} |}{|\cK^{(T+1)} |}\right) \leq O(dmN),
$$
since $|\cK^{(1)} | = 2^{O(d m N)}$ and $|\cK^{(T+1)} | \ge 1$.
\end{proof}

Finally, we remark that the randomized halving technique above will also result in a polynomial mistake
bound in the more demanding variant where not only the underlying constraint matrix but also the linear objective function is unknown.
This is because the coefficients of the objective function can be written in $dN$ bits if they are also represented 
with finite precision. However, the issue about the exponential running time 
still exists in the new setting.
\section{Missing Proofs from Section~\ref{sec:known_obj}}
\label{sec:missingproof3}
\subsection{Section~\ref{sec:learn-edge}}
\textbf{Proof of Lemma~\ref{lem:edge}.}
  Let $\bbx^*$ be the optimal solution of the linear program solved over the unknown polytope $\cP$, without the added constraint i.e. $\bbx^* \equiv \argmax_{x\in \cP} \bbc \cdot \bbx$.
  \begin{enumerate}
  \item
  Suppose that
  $\bbx^* \in \cN\utt$, then clearly $\bbx\utt = \bbx^*$. By Assumption $\ref{assumpt:vertex}$, $\bbx^*$ lies on a vertex of $\cP$ and therefore $\bbx\utt$ lies on one of the edges of $\cP$.
 \item
  Suppose that $\bbx^* \notin \cN\utt$ i.e. $\bp\utt \cdot \bbx^* > b\utt$. Then we
  claim that the optimal solution $\bbx\utt$ satisfies $\bp\utt \cdot
  \bbx\utt = b\utt$. Suppose to the contrary that $\bp\utt\cdot
  \bbx\utt < b\utt$. Since $\bbc \cdot \bbx^* \geq \bbc\cdot \bbx\utt$, then for any point $\bby \in \mathrm{Conv}(\bbx\utt, \bbx^*)$,
\begin{align*}
 \bbc\cdot \bby =  \bbc \cdot (\alpha\bbx\utt + (1-\alpha)\bbx^*) = \alpha (\bbc\cdot
  \bbx\utt) + (1 -\alpha) (\bbc \cdot \bbx^*) \geq \bbc \cdot \bbx\utt \qquad \forall \alpha \in [0,1].
\end{align*}
Since $\bbx\utt$ strictly satisfies the new constraint, there exists some point $\bby^*\in \mathrm{Conv}(\bbx\utt, \bbx^*)$ where $\bby^* \neq \bbx\utt$
such that $\bby^* \in \cP\utt$ (i.e. $\bby^*$ is also feasible).
It follows that $\bbc\cdot \bby^* \geq \bbc\cdot
\bbx\utt$, which contradicts
Assumption~\ref{assumpt:vertex}. Therefore, $\bbx\utt$ must bind the
additional constraint. Furthermore, by non-degeneracy Assumption \ref{assumpt:bind}, $\bbx\utt$ binds exactly $(d-1)$ constraints in $\cP$, i.e. $\bbx\utt$ lies at the intersection of $d-1$ hyperplanes of $\cP$ which are linearly independent by Assumption  \ref{assumpt:rank}. Therefore, $\bbx\utt$ must be on an edge of $\cP$. \ar{Why does the rank condition imply this? I'm not doubting it, but it should be spelled out.}\rynote{Better?}
\end{enumerate}
\qed

\textbf{Proof of Lemma~\ref{lem:collinear}.}
  Without loss of generality, let us assume $\bby$ can be written as convex combination of
  $\bbx$ and $\bbz$ i.e. $\bby = \alpha \bbx + (1 -
  \alpha)\bbz$ for some $\alpha\in (0,1)$. Let $B_y = \{j \mid A_j
  \bby = \bbb_j\}$ be the set of binding constraints for $\bby$. 
  We know that $|B_y| \geq d- 1$ by Assumption~\ref{assumpt:bind}. 
  For any $j$ in $B_y$, we consider
  the following two cases.
  \begin{enumerate}
  \item
  At least one of $\bbx$ and $\bbz$ belongs to the hyperplane
  $\{ \bbw \mid A_j \bbw =  \bbb_j\}$. 
  Then we claim that all three points bind the same
  constraint. Assume that $A_j\bbx = \bbb_j$, then we must have
  \[
  A_j\bbz = \frac{A_j (\bby - \alpha\bbx)}{(1 - \alpha) } =
  \frac{\bbb_j - \alpha \bbb_j}{(1 - \alpha)} = \bbb_j.
  \]
  Similarly, if we assume $A_j\bbz = \bbb_j$, we will also have
  $A_j\bbx = \bbb_j$.
  \item
  None of  $\bbx$ and $\bbz$ belongs to the hyperplane
  $\{ \bbw \mid A_j \bbw = \bbb_j\}$ i.e. $A_j \bbx < \bbb_j$ and $A_j \bbz < \bbb_j$
  both hold. Then we can write
  \[
 \bbb_j =   A_j\bby = \alpha A_j \bbx + (1 - \alpha) A_j \bbz < \alpha \bbb_j + (1 - \alpha) \bbb_j = \bbb_j,
  \]
  which is a contradiction.
\end{enumerate}
  It follows that for any $j\in B_y$, we have $A_j\bbx = A_j\bby =
  A_j\bbz = \bbb_j$. Since $|B_y| \geq d - 1$, we know by
  Assumption~\ref{assumpt:rank} that the set of points that bind any set of $d-1$ constraints in $B_y$ will form an edge-space and further this edge-space will include $\bbx,\bby,$ and $\bbz$.  
\qed
\ifnum\final=0
\ifnum\final=0
\begin{proof}
\else
\begin{proof}[Proof of Lemma~\ref{lem:corner-case}]
\fi

Note that the observed solution $\bbx\utt$ is a vertex in the polytope
$\cP\utt = \cP\cap\tN\utt$, that is an intersection of \emph{exactly}
$d$ constraints by Assumption~\ref{assumpt:bind} and
Assumption~\ref{assumpt:vertex}. Note that all points in $e$ binds at
least $d - 1$ constraints in $\cP$ and since $e\subseteq \tN\utt$,
then all points in $e$ binds at least $d$ constraints in $\cP\utt$.
It follows that any vertex of $\cP\utt$ on $e$ must bind at least
$(d+1)$ constraints, which rules out the possibility of $\bbx\utt$
being on $e$.
\end{proof}

\fi

\subsection{Section~\ref{sec:edge_analysis}}
\ifnum\final=0
\textbf{Proof of Lemma~\ref{lem:u1}.}
  As soon as $\learnedge$ invokes update rule \ref{u1}, it records the
  solution $\bbx^* \equiv \argmax_{\bbx\in \cP} \bbc\cdot \bbx$. Then, the
  prediction rule specified by~\ref{p1} prevents further updates of
  this type.  This is because $\bbx^*$ continues to remain optimal if it feasible in the more
  constrained problem (optimizing over the polytope $\cP\utt$).
\qed
\fi
\ifnum\final=0

\textbf{Proof of Lemma~\ref{lem:u2}.}
  Rule $\ref{u2}$ is
  invoked only when $\bbx\utt \notin X\utt$ and $\bbx\utt\notin e$ for any of $e\in E\utt$. 
  So after each invokation, a new point on the edge of $\cP$ is observed.
  Whenever $3$ points are observed on the same edge of $\cP$, the edge-space
  is learned by Lemma~\ref{lem:collinear} (since the points are necessarily collinear). Hence,
  the total number of times rule $\ref{u2}$ can be invoked is at most
  $3|E_\cP|$.
\qed
\fi

We now introduce Lemmas~\ref{lem:mod1}~and~\ref{lem:mod2} that will be used in the proof of Lemma \ref{lem:half}
which itself will be useful in the proof of Lemma~\ref{lem:u34}. But first, for completeness, in Lemma~\ref{lem:exists_edge}
we show that we are guaranteed the existence of an edge-space if the update implemented is \ref{u3} or \ref{u4}.
\begin{lemma}
$\newline$
\begin{enumerate}
\item[(1)] If update rule \ref{u3} is used, then there exists
edge-space $\hat e \in E\utt$ such that $\hat\bbx\utt \in \hat
e$. 
\item[(2)] If update rule \ref{u4} is used, then there
exists edge-space $e \in E\utt$ such that $\bbx\utt \in e$.
\end{enumerate}
\label{lem:exists_edge}
\end{lemma}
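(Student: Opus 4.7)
The plan is to exploit the fact that the update rules \ref{u1}--\ref{u4} are mutually exclusive and applied in order, combined with a case analysis on which prediction rule \ref{p1}--\ref{p4} produced $\hat\bbx\utt$.

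Part (2) follows almost immediately from the order of updates: if \ref{u4} is invoked then \ref{u2} must not have fired, and \ref{u2} fires exactly when $\bbx\utt$ is not on any learned edge-space in $E\utt$. Hence $\bbx\utt$ lies on some $e \in E\utt$, as claimed.

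For part (1) the work is slightly more delicate, and I would first rule out \ref{p1} and \ref{p2}. Rule \ref{p1} sets $\hat\bbx\utt = \bbx^*$ whenever $\bbx^*$ is recorded and $\bbx^* \in \cN\utt$; in that case Corollary~\ref{cor:opt_t} gives $\bbx\utt = \bbx^*$, so no mistake occurs and \ref{u3} is never reached. For \ref{p2} I would argue by contradiction: suppose \ref{p2} was used and \ref{u3} invoked. Since \ref{u1} did not fire, $\bbx\utt \in \tN\utt$; since \ref{u2} did not fire, $\bbx\utt$ lies on some $e \in E\utt$. By Lemma~\ref{lem:corner-case} we cannot have $e \subseteq \tN\utt$, so $e \in E\utt \setminus \bar E\utt$, and therefore $\bbx\utt \in ((E\utt \setminus \bar E\utt)\cap \tN\utt) \subseteq \mathrm{Cand}\utt$. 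Because $\bbx\utt \in \cP$, it does not lie in any infeasible region $Y_e$, contradicting the premise for \ref{p2} (which requires $\mathrm{Cand}\utt$ empty or contained in $\bigcup_e Y_e\utt$).

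It remains to handle \ref{p3} and \ref{p4}. In both cases $\hat\bbx\utt$ lies in $(X\utt \cup \bigcup_{e \in E\utt \setminus \bar E\utt} e) \cap \tN\utt$ (for \ref{p3} one looks at $\mathrm{Ext}\utt$, for \ref{p4} at $\mathrm{Cand}\utt$). The key observation is that $\hat\bbx\utt \in X\utt$ is incompatible with \ref{u3}: every previously observed solution lies in $\cP$, and combined with $\hat\bbx\utt \in \tN\utt \subseteq \cN\utt$ we conclude $\hat\bbx\utt \in \cP\utt$. Then optimality of $\bbx\utt$ forces $\bbc \cdot \hat\bbx\utt \leq \bbc \cdot \bbx\utt$, contradicting the trigger $\bbc \cdot \hat\bbx\utt > \bbc \cdot \bbx\utt$ of \ref{u3}. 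So whenever \ref{u3} is invoked, $\hat\bbx\utt$ must lie on some $\hat e \in E\utt \setminus \bar E\utt \subseteq E\utt$, establishing (1). The main bookkeeping obstacle is simply matching prediction rules against update rules; once that case analysis is laid out, the only nontrivial step is the feasibility argument for points already in $X\utt$, which rests directly on the fact that previously observed solutions are optima of LPs over subsets of $\cP$.
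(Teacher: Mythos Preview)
Your argument is correct, and in several places cleaner than the paper's own proof. For part (2) you take a more direct route: the paper argues by contradiction through the structure of $\mathrm{Ext}\utt$ and prediction rule~\ref{p3}, whereas you simply observe that reaching \ref{u4} means \ref{u2} did not fire, which by definition forces $\bbx\utt$ onto some learned edge-space. For part (1) the core step---showing that $\hat\bbx\utt\in X\utt$ implies $\hat\bbx\utt\in\cP\utt$ and hence $\bbc\cdot\hat\bbx\utt\le\bbc\cdot\bbx\utt$, contradicting the \ref{u3} trigger---is exactly the paper's argument; you add explicit case analysis ruling out \ref{p1} and \ref{p2}, which the paper leaves implicit.

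One small correction: your appeal to Corollary~\ref{cor:opt_t} when ruling out \ref{p1} is misplaced. That corollary says that if $\bbx\utt$ lies strictly inside the halfspace $\cN\utt$ then $\bbx\utt=\bbx^*$; it does not directly give that $\bbx^*\in\cN\utt$ forces $\bbx\utt=\bbx^*$. The argument you actually need (and clearly have in mind) is just monotonicity: $\bbx^*$ maximizes $\bbc\cdot\bbx$ over $\cP$, so if $\bbx^*\in\cP\cap\cN\utt=\cP\utt$ it also maximizes over the smaller set $\cP\utt$, and uniqueness (Assumption~\ref{assumpt:vertex}) gives $\bbx\utt=\bbx^*=\hat\bbx\utt$, hence no mistake. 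Replace the citation and the step is airtight.
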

\begin{proof}
We prove this by contradiction.
First consider the case in which $\bbc \cdot \hat\bbx\utt
> \bbc \cdot \bbx\utt$ and suppose $\hat\bbx\utt \in \{ \bbx \in X\utt\mid \forall e\in E\utt, \bbx \notin e\}$.  When this is the case we know
that $\hat\bbx\utt$ is feasible at day $t$ and this contradicts
$\bbx\utt$ being optimal at that day because $\bbc\cdot\hat\bbx\utt
> \bbc\cdot\bbx\utt$.

Next consider the case in which $\bbc \cdot \hat\bbx\utt
 < \bbc \cdot \bbx\utt$ and suppose $\bbx\utt \in \{\bbx \in X\utt\mid \forall
 e \in E\utt, \bbx \not\in e\}$.  We would have
 used \ref{p3} to make a prediction because
 $\tN\utt\cap\mathrm{Ext}\utt$ is non-empty and includes at least the
 point $\bbx\utt$.  Note that by \ref{p3}, we have $\hat\bbx\utt
 = \argmax_{\tN\utt\cap\mathrm{Ext}\utt} \bbc \cdot \bbx$.  Since
 $\bbx\utt \in \tN\utt\cap\mathrm{Ext}\utt$, we must also have
 $\bbc \cdot \hat\bbx\utt \geq \bbc \cdot \bbx\utt$, which is again a 
 contradiction.
\end{proof}

\begin{lemma}
If \ref{u3} is implemented at day $t$, then $\hat\bbx\utt \notin \cP$ and $\hat\bbx\utt \in  (Q_{\hat e}^i)\utt \cap \text{Ext}\utt $ for some $i = 0$ or $1$ where $\hat e$ is given in \ref{u3}. 
\label{lem:mod1}
\end{lemma}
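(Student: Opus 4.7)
The plan is to show that whenever update \ref{u3} is implemented, the prediction $\hat\bbx\utt$ must have been produced by rule \ref{p3}, and then to use the structure of $\mathrm{Ext}\utt$ together with a feasibility argument to place $\hat\bbx\utt$ inside a specific questionable interval.

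Because \ref{u3} was invoked, neither \ref{u1} nor \ref{u2} fired, so $\bbx\utt \in \tN\utt$ and $\bbx\utt$ lies on some already-learned edge-space $e \in E\utt$. Lemma~\ref{lem:corner-case} rules out $e \in \bar E\utt$, so $e \cap \tN\utt$ is the single point $\{\bbx\utt\}$, which places $\bbx\utt \in \mathrm{Cand}\utt$. As the optimum of $\cP\utt$, $\bbx\utt$ is feasible, so it is not in any infeasible region $Y_{e'}\utt$. I then rule out the other three prediction rules. Rule \ref{p1} cannot produce a mistake: if $\bbx^*$ has been recorded and $\bbx^* \in \cN\utt$, then $\bbx^* \in \cP\utt$, and comparing $\bbc \cdot \bbx^* \geq \bbc \cdot \bbx\utt$ (optimality of $\bbx^*$ over $\cP \supseteq \cP\utt$) with $\bbc \cdot \bbx\utt \geq \bbc \cdot \bbx^*$ (optimality of $\bbx\utt$ over $\cP\utt$) together with Assumption~\ref{assumpt:vertex} forces $\bbx^* = \bbx\utt$. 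Rule \ref{p2}'s trigger ($\mathrm{Cand}\utt = \emptyset$ or $\mathrm{Cand}\utt \subseteq \bigcup_{e'} Y_{e'}\utt$) is violated by the presence of $\bbx\utt$ in $\mathrm{Cand}\utt$ and its feasibility. Rule \ref{p4} would set $\hat\bbx\utt = \argmin_{\bbx \in \mathrm{Cand}\utt}\bbc\cdot\bbx$, so $\bbc \cdot \hat\bbx\utt \leq \bbc \cdot \bbx\utt$, contradicting the \ref{u3} condition $\bbc \cdot \hat\bbx\utt > \bbc \cdot \bbx\utt$.

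This leaves \ref{p3}, and so $\hat\bbx\utt \in \mathrm{Ext}\utt$ by construction. Unpacking the definition of $\mathrm{Ext}\utt$, either $\hat\bbx\utt \in X\utt \cap \tN\utt$, or there is an edge-space $\hat e \in E\utt \setminus \bar E\utt$ with $\hat\bbx\utt \in \mathrm{Conv}((M_{\hat e}^0)\utt, (M_{\hat e}^1)\utt) \cap \tN\utt$. In the first case $\hat\bbx\utt \in \cP \cap \tN\utt \subseteq \cP\utt$, so $\hat\bbx\utt$ is feasible at day $t$ and $\bbc \cdot \hat\bbx\utt \leq \bbc \cdot \bbx\utt$, contradicting \ref{u3}. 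In the second case the segment $\mathrm{Conv}((M_{\hat e}^0)\utt, (M_{\hat e}^1)\utt)$ decomposes as $F_{\hat e}\utt \cup (Q_{\hat e}^0)\utt \cup (Q_{\hat e}^1)\utt$, and $\hat\bbx\utt \in F_{\hat e}\utt$ would again make $\hat\bbx\utt$ feasible and lead to the same contradiction. Hence $\hat\bbx\utt \in (Q_{\hat e}^i)\utt \cap \mathrm{Ext}\utt$ for some $i \in \{0,1\}$, and the same optimality argument shows $\hat\bbx\utt \notin \cP$ (otherwise $\hat\bbx\utt \in \cP\utt$ and we contradict $\bbc \cdot \hat\bbx\utt > \bbc \cdot \bbx\utt$).

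The main obstacle will be excluding rule \ref{p2}: because \ref{p2} does not constrain where $\hat\bbx\utt$ is placed, one cannot argue about the predicted point directly and must instead rule out its trigger conditions. The key is the combination of \ref{u1} and \ref{u2} not having fired with Lemma~\ref{lem:corner-case}, which together certify both $\bbx\utt \in \mathrm{Cand}\utt$ and $\bbx\utt \notin \bigcup_{e'} Y_{e'}\utt$, breaking both disjuncts of the \ref{p2} trigger.
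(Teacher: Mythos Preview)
Your proof is correct and follows essentially the same strategy as the paper: rule out all prediction rules except \ref{p3}, then use the structure of $\mathrm{Ext}\utt$ together with the optimality of $\bbx\utt$ to force $\hat\bbx\utt$ into a questionable interval and out of $\cP$. The paper's version is slightly terser because it first invokes Lemma~\ref{lem:exists_edge} to place $\hat\bbx\utt$ on some $\hat e\in E\utt$ and uses that to dismiss \ref{p1} and \ref{p2}, whereas you eliminate \ref{p1} and \ref{p2} directly by exhibiting $\bbx\utt\in\mathrm{Cand}\utt\setminus\bigcup_{e'}Y_{e'}\utt$; both routes are valid, and yours is arguably more self-contained. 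One small wording quibble: the segment $\mathrm{Conv}((M_{\hat e}^0)\utt,(M_{\hat e}^1)\utt)$ is contained in, but does not equal, $F_{\hat e}\utt\cup (Q_{\hat e}^0)\utt\cup (Q_{\hat e}^1)\utt$ (it only picks up the inner halves of the questionable intervals), so ``decomposes as'' should be ``is contained in''; the logical conclusion you draw from it is unaffected.
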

\begin{proof}
Each time the algorithm makes update \ref{u3} we know that the
algorithm's prediction $\hat\bbx\utt$ was on some edge-space $\hat e \in E\utt$ by Lemma~\ref{lem:exists_edge}.  Therefore, $\learnedge$ did not use \ref{p1} or \ref{p2}
to predict $\hat\bbx\utt$. So we only need to check \ref{p3} and \ref{p4}.
\begin{itemize}
\item If \ref{p3} was used, we know that $\hat\bbx\utt \in \tN\utt$ but $\hat\bbx\utt$ must violate a constraint of $\cP$, due to $\bbx\utt$ being the observed solution and having lower objective value.  This implies that $\hat\bbx\utt$ is in some questionable region, say $(Q_{\hat e}^i)\utt$ for $i = 0$ or $1$ but also in the  extended feasible on $\hat e$, i.e. $\hat\bbx\utt \in \text{Ext}\utt  \cap (Q_{\hat e}^i)\utt$.  
\item If \ref{p4} was used, then $\text{Ext}\utt = \emptyset$. However $\learnedge$ selected $\hat\bbx\utt$ from $\text{Cand}\utt \neq \emptyset$ with the lowest objective value.  
Finally, when updating with \ref{u3} \emph{(i)} $\bbx\utt\in\text{Cand}\utt$ and \emph{(ii)} $\bbc\cdot \bbx\utt < \bbc \cdot\hat\bbx\utt$. So we could not have used \ref{p4} to predict $\hat\bbx\utt$.
\end{itemize}
\end{proof}

\begin{lemma}
If \ref{u4} is implemented at day $t$, then $\bbx\utt \in (Q_e^i)\utt \backslash \text{Ext}\utt$ for some $i = 0$ or $1$ where $e$ is given in \ref{u4}.  
\label{lem:mod2}
\end{lemma}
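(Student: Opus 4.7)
The plan is to mirror the structure of the proof of Lemma~\ref{lem:mod1}, essentially swapping the roles of $\hat\bbx\utt$ and $\bbx\utt$. I want to conclude two things about $\bbx\utt$: first, that it lies in a questionable interval $(Q_e^i)\utt$, and second, that it lies outside $\text{Ext}\utt$. Throughout, $e$ is the edge-space guaranteed by part (2) of Lemma~\ref{lem:exists_edge}, so $\bbx\utt \in e$. Since update \ref{u1} was not triggered before \ref{u4}, I also have $\bbx\utt \in \tN\utt$, and since $\bbx\utt$ is an actual optimal (hence feasible) solution, $\bbx\utt \in \cP$, which immediately rules out $\bbx\utt \in Y_e\utt$.

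The first step is to identify which prediction rule produced $\hat\bbx\utt$. I will rule out \ref{p1} and \ref{p2}. Rule \ref{p1} would give $\hat\bbx\utt = \bbx^* = \argmax_{\bbx \in \cP} \bbc\cdot\bbx$, so $\bbc\cdot\hat\bbx\utt \geq \bbc\cdot\bbx\utt$, contradicting the defining inequality of \ref{u4}. For \ref{p2}, I first use Lemma~\ref{lem:corner-case} to conclude $e \notin \bar E\utt$ (since $\bbx\utt \in e$), so $\bbx\utt \in (E\utt \setminus \bar E\utt) \cap \tN\utt \subseteq \text{Cand}\utt$; combined with $\bbx\utt \notin Y_e\utt$, neither alternative that triggers \ref{p2} can hold. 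Hence the prediction must have been made by \ref{p3} or \ref{p4}.

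The second step shows $\bbx\utt \notin \text{Ext}\utt$ in either remaining case. If \ref{p4} was used then $\text{Ext}\utt = \emptyset$ by definition, so the claim is immediate. If \ref{p3} was used then $\hat\bbx\utt = \argmax_{\bbx \in \text{Ext}\utt} \bbc\cdot\bbx$, and so $\bbx\utt \in \text{Ext}\utt$ would force $\bbc\cdot\hat\bbx\utt \geq \bbc\cdot\bbx\utt$, again contradicting the condition for \ref{u4}.

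The final step is to upgrade $\bbx\utt \notin \text{Ext}\utt$ to $\bbx\utt \notin F_e\utt$. Recall $F_e\utt \subseteq \mathrm{Conv}((M_e^0)\utt, (M_e^1)\utt)$ by the way midpoints bracket the feasible interval, and $\bbx\utt \in \tN\utt$ with $e \in E\utt \setminus \bar E\utt$, so if $\bbx\utt$ were in $F_e\utt$ it would land in $\mathrm{Conv}((M_e^0)\utt,(M_e^1)\utt)\cap \tN\utt \subseteq \text{Ext}\utt$, contradicting the previous step. Hence $\bbx\utt \in e \setminus (F_e\utt \cup Y_e\utt) = (Q_e^0)\utt \cup (Q_e^1)\utt$, which together with $\bbx\utt \notin \text{Ext}\utt$ establishes the lemma. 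I expect the only subtle point is the careful application of Lemma~\ref{lem:corner-case} to guarantee $e \in E\utt \setminus \bar E\utt$, which is what allows $\bbx\utt$ to participate in $\text{Cand}\utt$ and in the $\text{Ext}\utt$ argument.
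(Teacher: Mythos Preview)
Your proposal is correct and follows essentially the same approach as the paper: rule out prediction rules \ref{p1} and \ref{p2}, then handle \ref{p3} and \ref{p4} separately to conclude $\bbx\utt\notin\mathrm{Ext}\utt$ and $\bbx\utt\in (Q_e^i)\utt$. Your version is in fact more careful than the paper's: where the paper simply asserts ``by application of Lemma~\ref{lem:exists_edge}'' that \ref{p1} and \ref{p2} are excluded and that feasibility of $\bbx\utt$ forces it into a questionable interval, you spell out the optimality contradiction for \ref{p1}, invoke Lemma~\ref{lem:corner-case} to place $\bbx\utt$ in $\mathrm{Cand}\utt$ for \ref{p2}, and explicitly derive $\bbx\utt\notin F_e\utt$ from $F_e\utt\subseteq\mathrm{Conv}((M_e^0)\utt,(M_e^1)\utt)$ and $\bbx\utt\notin\mathrm{Ext}\utt$.
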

\begin{proof}
As in Lemma~\ref{lem:mod1}, $\learnedge$ did not use \ref{p1} or \ref{p2}
to predict $\hat\bbx\utt$ (again by application of Lemma~\ref{lem:exists_edge}). So we only need to check \ref{p3} and \ref{p4}.
\begin{itemize}
\item If \ref{p3} was used,  then $\learnedge$ did not guess $\bbx\utt$ which had the higher objective because it was outside of $\text{Ext}\utt$ along edge-space $e$.  Since $\bbx\utt$ is feasible, it must have been on some questionable region on $e$, say $(Q_e^i)\utt$ for some $i = 0$ or $1$.  Hence, $\bbx\utt \in (Q_e^i)\utt \backslash \text{Ext}\utt$.
\item If \ref{p4} was used, then $\text{Ext}\utt = \emptyset$ and thus $\bbx\utt$ was a candidate solution but outside of the extended feasible interval along edge-space $e$.  Further, because $\bbx\utt \in \cP$ we know that $\bbx\utt$ must be in some questionable interval along $e$, say $(Q_e^i)\utt$ for some $i = 0$ or $1$.  Therefore, $\bbx\utt \in (Q_e^i)\utt \backslash \text{Ext}\utt$.
\end{itemize}
\end{proof}
\ifnum\final=1
\ifnum\final=0
\begin{proof}
\else
\begin{proof}[Proof of Lemma~\ref{lem:edge}]
\fi
  Let $\bbx^*$ be the optimal solution of the linear program solved over the unknown polytope $\cP$, without the added constraint i.e. $\bbx^* = \argmax_{x\in \cP} \bbc \cdot \bbx$.
  \begin{enumerate}
  \item
  Suppose that
  $\bbx^* \in \cN\utt$, then clearly $\bbx\utt = \bbx^*$. By Assumption $\ref{assumpt:vertex}$, $\bbx\utt$ lies on a vertex of $\cP$, and therefore $\bbx\utt$ lies on one of the edges of $\cP$.
 \item
  Suppose that $\bbx^* \notin \cN\utt$ i.e. $\bp\utt \cdot \bbx^* > b\utt$. Then we
  claim that the optimal solution $\bbx\utt$ satisfies $\bp\utt \cdot
  \bbx\utt = b\utt$. Suppose to the contrary that $\bp\utt\cdot
  \bbx\utt < b\utt$. Since $\bbc \cdot \bbx^* \geq \bbc\cdot \bbx\utt$, then for any point $\bby \in $ conv$(\bbx\utt, \bbx^*)$,
\begin{align*}
 \bbc\cdot \bby =  \bbc \cdot (\alpha\bbx\utt + (1-\alpha)\bbx^*) = \alpha (\bbc\cdot
  \bbx\utt) + (1 -\alpha) (\bbc \cdot \bbx^*) \geq \bbc \cdot \bbx\utt \qquad \forall \alpha \in [0,1].
\end{align*}
Since $\bbx\utt$ strictly satisfies the new constraint, there exists some point $\bby^*\in $ conv$(\bbx\utt, \bbx^*)$ where $\bby^* \neq \bbx\utt$
such that $\bby^* \in \cP\utt$ (i.e. $\bby^*$ is also feasible).
It follows that $\bbc\cdot \bby^* \geq \bbc\cdot
\bbx\utt$, which contradicts
Assumption~\ref{assumpt:vertex}. Therefore, $\bbx\utt$ must bind the
additional constraint. Furthermore, by non-degeneracy Assumption \ref{assumpt:bind}, $\bbx\utt$ binds exactly $(d-1)$ constraints in $\cP$, i.e. $\bbx\utt$ lies at the intersection of $d-1$ hyperplanes of $\cP$. As long as non-degeneracy Assumption  \ref{assumpt:rank} holds we know that these hyperplanes must be linearly independent, so $\bbx\utt$ must be on an edge of $\cP$. \ar{Why does the rank condition imply this? I'm not doubting it, but it should be spelled out.}\rynote{Better?}
\end{enumerate}
\end{proof}

\fi

\begin{lemma}
Each time \ref{u3} or \ref{u4} is used, there is a questionable interval on some edge-space whose length is decreased by at least a factor of two.  
\label{lem:half}
\end{lemma}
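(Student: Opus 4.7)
The plan is to split into the two cases corresponding to which update rule was invoked, apply Lemmas~\ref{lem:mod1} and~\ref{lem:mod2} to locate the relevant point within a specific questionable sub-interval, and then use the convexity of $\cP$ along the edge-space to argue that a large portion of the old questionable interval can be reclassified as either infeasible (case \ref{u3}) or feasible (case \ref{u4}).

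First, suppose update \ref{u3} is triggered on day $t$, with edge-space $\hat e$ and index $i \in \{0,1\}$ as in Lemma~\ref{lem:mod1}. By that lemma, $\hat \bbx\utt \in (Q_{\hat e}^i)\utt \cap \text{Ext}\utt$ and $\hat\bbx\utt \notin \cP$. By the definition of $\text{Ext}\utt$, its restriction to $\hat e$ is $\mathrm{Conv}((M_{\hat e}^0)\utt,(M_{\hat e}^1)\utt)$, which contains $F_{\hat e}\utt$ and extends into each questionable interval only up to its midpoint. So $\hat\bbx\utt$ lies between the $F_{\hat e}\utt$-endpoint of $(Q_{\hat e}^i)\utt$ and the midpoint $(M_{\hat e}^i)\utt$. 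Because $\cP \cap \hat e$ is a convex subset of a line (an interval), and the infeasible region $(Y_{\hat e}^i)\utt$ lies on the far side of $(Q_{\hat e}^i)\utt$ from $F_{\hat e}\utt$, the infeasibility of $\hat\bbx\utt$ implies that every point of $(Q_{\hat e}^i)\utt$ on the $(Y_{\hat e}^i)\utt$-side of $\hat\bbx\utt$ is also infeasible. $\learnedge$ can therefore absorb this sub-interval into the infeasible region. Since $\hat\bbx\utt$ is no farther from $F_{\hat e}\utt$ than $(M_{\hat e}^i)\utt$, the newly absorbed sub-interval has length at least half of $|(Q_{\hat e}^i)\utt|$, shrinking the questionable interval by at least a factor of two.

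Next, suppose update \ref{u4} is triggered on day $t$, with edge-space $e$ and index $i \in \{0,1\}$ as in Lemma~\ref{lem:mod2}. By that lemma, $\bbx\utt \in (Q_e^i)\utt \setminus \text{Ext}\utt$, so $\bbx\utt$ lies between $(M_e^i)\utt$ and $(Y_e^i)\utt$. Because $\bbx\utt$ is an observed optimal solution it is feasible, and convexity of $\cP \cap e$ together with the feasibility of every point of $F_e\utt$ implies that the entire segment from the $F_e\utt$-endpoint of $(Q_e^i)\utt$ to $\bbx\utt$ is feasible. Hence $\learnedge$ extends $F_e$ to include all of this segment. Since $\bbx\utt$ lies on the $(Y_e^i)\utt$-side of $(M_e^i)\utt$, this absorbed segment has length at least half of $|(Q_e^i)\utt|$, again shrinking the questionable interval by at least a factor of two.

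The main obstacle I expect is being precise about the geometry along the one-dimensional edge-space: specifically, formalizing that ``infeasible on one side of a feasible region'' is preserved under convexity of $\cP$, and checking that the sub-interval updates made by \ref{u3} and \ref{u4} are exactly what the algorithm's bookkeeping records. Once that is in place, the ``factor of two'' statement follows immediately from the fact that $\text{Ext}\utt$ was defined to cut each questionable interval exactly at its midpoint, which is precisely why $\learnedge$ uses the midpoints $(M_e^i)\utt$ in the definition of $\text{Ext}\utt$ in the first place.
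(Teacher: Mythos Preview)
Your proposal is correct and follows essentially the same approach as the paper's proof: in both cases you invoke Lemmas~\ref{lem:mod1} and~\ref{lem:mod2} to place the relevant point on the appropriate side of the midpoint $(M_e^i)\utt$, then use one-dimensional convexity along the edge-space to reclassify at least half of the questionable interval as infeasible (for \ref{u3}) or feasible (for \ref{u4}). The paper phrases the infeasibility propagation in \ref{u3} slightly differently---it argues that the single violated constraint of $\cP$ is also violated by the points in $(Y_{\hat e}^i)\utt$ and hence by the whole segment $\mathrm{Conv}(\hat\bbx\utt,(Y_{\hat e}^i)\utt)$---but this is equivalent to your interval-convexity argument for $\cP\cap\hat e$.
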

\ifnum\final=0
\begin{proof}
From Lemma \ref{lem:mod1} we know that if \ref{u3} is used then $\hat\bbx\utt \in \hat e$, is infeasible but outside of the known infeasible interval $(Y_{\hat e}^i)\utt$ and inside of the extended feasible interval along $\hat e$.  Note that if a point $\bbx$ is infeasible along edge space $\hat e$ in the questionable interval $(Q_{\hat e}^i)\utt$, then the constraint it violates is also violated by all points in $Y_e^i$. Hence the interval $\mathrm{Conv}(\bbx,(Y_e^i)\utt)$ contains only infeasible points. By the definition of $(M_e^i)\utt$ and the fact that $\hat \bbx\utt$ is in the extended feasible region on $\hat e$, we know that 
$$
|(Q_e^i)\uttf|= \left| (Q_e^i)\utt \backslash \mathrm{Conv} \left(\hat\bbx\utt, (Y_e^i)\utt \right)\right| \leq \left|(Q_e^i)\utt \backslash\mathrm{Conv} \left((M_e^i)\utt, (Y_e^i)\utt \right) \right|= \frac{|(Q_e^i)\utt|}{2}.
$$
Further, from convexity we know that if $\bbx\utt$ is feasible on edge-space $e$ at day $t$, then the interval $\mathrm{Conv}(\bbx\utt,F_e\utt)$ only contains feasible points on $e$.  We know that $\bbx\utt$ is feasible and in a questionable interval $(Q_e^i)\utt$ along edge space $e$ but outside its extended feasible region, by  Lemma \ref{lem:mod2}.  Thus, by definition of the midpoint $(M_e^i)\utt$ we have
$$
|(Q_e^i)\uttf|= \left| (Q_e^i)\utt \backslash \mathrm{Conv}\left(\hat\bbx\utt, (F_e)\utt \right)\right| \leq \left|(Q_e^i)\utt \backslash\mathrm{Conv} \left((M_e^i)\utt, F_e\utt \right) \right|= \frac{|(Q_e^i)\utt|}{2}.
$$
\end{proof}

\fi

\textbf{Proof of Lemma~\ref{lem:u34}.}
  Let $Q_e^i$ be the updated questionable interval. We know initially
  $Q_e^i$ has length at most than $2\sqrt{d}$ by Assumption~\ref{assumpt:poly}. 
  In Lemma~\ref{lem:half} we showed that each time an
  update~\ref{u3} or~\ref{u4} is invoked, the length of $Q_e^i$ is decreases by at least a half.
  Then after at most $O(N\log(d))$ updates, the
  interval will have length less than $2^{-N}$ after which the interval will be updated at most once
  because there is at most one point up to precision $N$ in it.
  
  Therefore, the
  total number of updates on $Q_e^i$ is bounded by $O(N\log(d))$. 
  Since there are at most $2|E_\cP|$ questionable intervals, the total number of updates~\ref{u3} and~\ref{u4} is bounded by $O(|E_\cP|N\log(d))$.
\qed

\subsection{Section~\ref{sec:lowerbound}}
We prove the lower bound in Theorem \ref{thm:impossibility} initially for $d=3$.
\begin{theorem}
\label{thm:imp-3d}
If Assumptions \ref{assumpt:vertex} and \ref{assumpt:precision} hold, then the number of mistakes of any learning algorithm 
in the known objective problem
is at least $\Omega(N)$ for $d=3$.
\end{theorem}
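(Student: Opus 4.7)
The plan is to prove the lower bound by a standard adversary argument that implements a binary search on an unknown $N$-bit parameter. I will construct a parameterized family of polytopes $\{\cP_k\}_{k=1}^{2^N}$ in $\R^3$, each satisfying Assumptions \ref{assumpt:vertex}, \ref{assumpt:poly}, \ref{assumpt:precision}, \ref{assumpt:rank}, and \ref{assumpt:bind}, along with an adaptive adversary who, no matter what any learning algorithm does, can force a mistake on each of $N$ rounds while maintaining a nonempty set $K\utt \subseteq \{1,\ldots,2^N\}$ of indices $k$ for which the declared LP outcomes so far are simultaneously consistent with $\cP_k$ under the constraints played so far.

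Concretely, the polytopes in the family will agree on all but one vertex, whose coordinate along a fixed axis is $k/2^N$. This choice directly gives $N$-bit precision on the vertices (Assumption \ref{assumpt:precision}), and by placing the remaining vertices in generic position the non-degeneracy conditions (Assumptions \ref{assumpt:rank}, \ref{assumpt:bind}) hold for every $k$. The fixed objective $\bbc$ will be chosen with a small generic tilt (for example $\bbc = (\epsilon_1, \epsilon_2, 1)$ with $\epsilon_i$ tiny and rationally independent) so that the optimum on any face is unique, securing Assumption \ref{assumpt:vertex}. Without any additional constraint, the optimizer on $\cP_k$ will be the unknown vertex itself, so the adversary must always introduce a halfspace $\cN\utt$ that preempts a revealing answer.

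On each day $t$ the adversary chooses $\cN\utt$ to be a halfspace whose bounding plane depends on one free parameter, chosen dyadically so that its precision is $O(N)$, and placed so that for every $k \in K\utt$ the constrained optimum $\bbx\utt_{\cP_k}= \arg\max_{\bbx\in\cP_k\cap\cN\utt}\bbc\cdot\bbx$ lands in one of exactly two distinct cells $y^{L}\utt$ or $y^{H}\utt$, according to whether $k$ lies below or above the median of $K\utt$. Once the learner publishes $\hat\bbx\utt$, the adversary declares the true optimum to be whichever of $y^{L}\utt, y^{H}\utt$ differs from $\hat\bbx\utt$ and updates $K\uttf$ to the corresponding half of $K\utt$. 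This forces a mistake, halves $|K\utt|$ in the worst case, and keeps the adversary's play consistent with every polytope in $K\uttf$. Since $|K^{(1)}| = 2^N$, the adversary can play this game for at least $\lfloor \log_2 2^N\rfloor = N$ rounds before $|K\utt|$ drops to $1$, yielding the claimed $\Omega(N)$ mistake lower bound.

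The main obstacle is the geometric construction, which is where the hypothesis $d \geq 3$ is essential: in two dimensions there are not enough independent directions to design cutting halfspaces that both (i) produce exactly two distinct constrained optima across the surviving candidates and (ii) leave a large halving gap in $K\utt$ while keeping the resulting $\cP_k\cap\cN\utt$ non-degenerate and the optima at vertices. In $\R^3$ I expect to pin down the family by letting the $2^N$ candidate positions of the movable vertex lie on a single edge-space $e$ of $\cP$, and to pin down the cutting planes so that each one passes through two fixed vertices of $\cP$ not adjacent to $e$, leaving a single dyadic scalar as the adversary's free parameter. Verifying explicitly that each $\cP_k\cap\cN\utt$ has a unique vertex optimum, that this optimum takes only two values as $k$ ranges over $K\utt$, and that all precision and non-degeneracy assumptions are preserved, is the bulk of the remaining work; once these geometric facts are in place, the halving argument above immediately delivers the theorem.
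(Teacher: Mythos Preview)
Your high-level strategy --- an adaptive adversary running a binary search that halves a set of still-consistent polytopes at every mistake --- is exactly the paper's strategy, and the counting $\lfloor\log_2 2^N\rfloor=N$ is the same. What differs is the geometry, and your sketch leaves the hard part undone while the concrete hint you give is unlikely to work as stated.

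The paper does not place a single movable vertex on one edge and then try to engineer cuts that bifurcate the optimum according to the median of $K\utt$. Instead it works with \emph{two} fixed edge-spaces of the eventual polytope. On each round the adversary's hyperplane is chosen to pass through the current midpoints $r_1\utt,r_2\utt$ of the questionable intervals on both edges (plus a third non-collinear point to pin the plane). The two midpoints themselves are the only candidate constrained optima: $r_2\utt$ has strictly larger objective, so it is the answer if and only if it is feasible, and feasibility of $r_2\utt$ is precisely what depends on the still-undetermined endpoint of the second edge. Whatever the learner guesses, the adversary declares the other midpoint optimal, halves the appropriate interval, and at the end writes down an explicit $4\times 3$ system $A\bbx\le\bbb$ (depending only on the final interval centers) that is consistent with every declared optimum.

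Your hint --- pass the cut through ``two fixed vertices of $\cP$ not adjacent to $e$'' --- runs into trouble for exactly the reason the paper uses two edges. If the plane contains two fixed vertices of $\cP_k$, those vertices sit on $\tN\utt$ and are themselves strong candidates for the constrained optimum; but they do not depend on $k$, so you lose the bifurcation you need. To get the optimum to split into $y^L\utt$ versus $y^H\utt$ according to the median of $K\utt$, the cut must interact with the $k$-dependent geometry, i.e.\ with faces incident to your movable vertex, and then the two resulting candidate optima generically move with $k$ rather than taking two fixed values. The paper's two-edge construction sidesteps this entirely: the candidate optima are \emph{chosen} points (the midpoints), and the only role of $k$ is to decide which one is feasible. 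If you want to salvage your single-moving-vertex picture, you will effectively be driven to the same idea --- let the cut meet the moving edge at a dyadic midpoint and meet a second, fixed edge at another point --- which is the paper's construction.
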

\begin{proof}
The high level idea of the proof is as follows. In each day the adversary can pick two points on the two bold edges in Figure~\ref{fig:3d}
as the optimal points and no matter what the learner predicts, the adversary can return a point that is different than
the guess of the learner as the optimal point. 
If the adversary picks the midpoint of the questionable region in each day, then the size of the questionable region in both of the lines
will shrink in half. So this process can be repeated $N$ times where each entry of every vertex can be written with as a multiple of $1/2^N$, by Assumption \ref{assumpt:precision}.
Finally, we show that at the end of this process, the adversary can return a simple polytope 
 which is consistent with all the observed optimal points so far. 
 
We formalize this high level in procedure $\adversary$ that takes as input any learning 
algorithm $\cL$ and interacts with $\cL$ for $N$ days. Each day the adversary presents a constraint.
Then no matter what $\cL$ 
predicts, the adversary ensures that $\cL$'s prediction is incorrect.  
After $N$ interactions, the adversary outputs a feasible polytope that is 
consistent with all of the previous actions of the adversary.

\begin{figure}[h]
 \begin{minipage}[c]{.6\textwidth}
\centering
\includegraphics[width=0.7\textwidth]{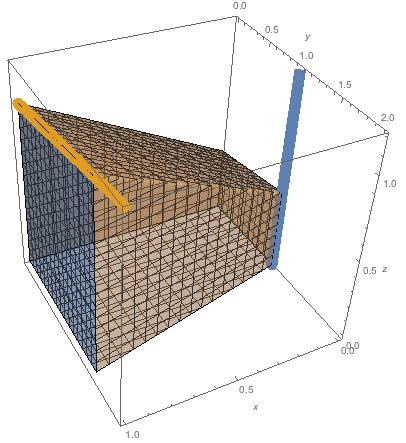}
\end{minipage}
\begin{minipage}[c]{.34\textwidth}
\caption{The underlying polytope in the proof of Theorem~\ref{thm:impossibility}. The two learned edges  are in bold. }
\label{fig:3d}
\end{minipage}
\end{figure}


In procedure $\adversary$, subroutines $\nac$ and $\ad$ are used 
to pick a constraint and return an optimal point that causes $\cL$ to make a mistake, respectively.  
We use the notation mid$(R)$ in subroutines $\nac$ and $\ad$ to denote 
the middle point of a real interval $R$, top$(R)$ to be the largest point in $R$, and bot$(R)$ to be the smallest value in $R$.  Finally, we assume the known objective function is $c=(0,0,1)$.  

\begin{algorithm}[h]
 \begin{algorithmic}
 \INPUT Any learning algorithm $\cL$ and bit precision $N$
 \OUTPUT Polytope $\cP$ that is consistent with $\cL$ making a mistake each day.
 \Procedure{$\adversary$}{$\cL,N$} 
\State Set $R_1^{(0)}=[0,1], R_2^{(0)}=[1,2]$. \Comment{Initialize}
	\For{$t = 1,\cdots, N$}		
		\State $\qquad \left(\left(\bp\utt,q\utt\right),r_1\utt,r_2\utt\right)\gets \nac(R_1\uttb , R_2\uttb)$.
		\State $\qquad$ Show constraint $\bp\utt\cdot\bbx \leq q\utt$ to $\cL$.  \Comment{Constraint} 
		\State $\qquad $ Get prediction $\hat\bbx\utt$ from $\cL$. 
		\State$\qquad \left(\bbx\utt, R_1\utt, R_2\utt\right)\gets\ad\left(R_1\uttb , R_2\uttb, r_1\utt,r_2\utt, \hat\bbx\utt\right)$. \Comment{Update}
		\State $\qquad $ Reveal the optimal $\bbx\utt\ne\hat\bbx\utt$ and update the regions $R_1\utt$ and $R_2\utt$.
	\EndFor
	\State $A,\bbb \gets \conmatr(R_1^N,R_2^N)$ \Comment{Constraint matrix consistent with $\{ \bbx\utt\mid t \in [N]\}$} \\
	\Return $A,\bbb$
 \EndProcedure
 \end{algorithmic}
 \caption{Adversary Updates ($\adversary$)}
 \label{alg:adversary}
\end{algorithm}

The procedure $\nac$ takes as input two real valued intervals and then outputs two points $r_1$ and $r_2$ as well as the new constraint denoted by the pair $(\bp, q)$. 
The two points will be used as input in $\ad$ along with the learner's prediction. 
In procedure $\ad$ the adversary makes sure that the learner suffers a mistake.
On each day, one of the points say $r_2$ produced by $\nac$ has a higher objective than the other one.
If the learner chooses $r_2$ then the adversary will simply choose a polytope that makes $r_2$ infeasible 
so that $r_1$ is actually the optimal point that day. If the learner chooses $r_1$ then the adversary picks $r_2$ as the optimal solution.
Note that the three points $r_1$, $r_2$, and $r_3$ computed in $\nac$ all bind the constraint and are not collinear, and thus uniquely define the 
hyperplane $\{ \bbx : \bp\cdot \bbx = q\}$. Finally, in $\ad$ the adversary updates the new feasible region for her use in the next days.


\begin{algorithm}[h]
 \begin{algorithmic}
 \Procedure{$\nac$}{$R_1 , R_2$} 
\State Set $\epsilon\leftarrow 0.01$.
\State Set $r_1\gets \left(0,1, \text{mid}(R_1)\right)$
\State $\qquad r_2 \gets \left(1, \text{mid}(R_2), 1+\epsilon\cdot\text{mid}(R_2)\right)$
\State $\qquad  r_3\gets\left(1, \text{mid}(R_2), 0\right)$.
\State Set $\bp = \left(1-\text{mid}(R_2),1,0 \right)$ and $q = 1$
\Comment{The constraint is $\bp \cdot \bbx \leq 1$ and binds at $r_1, r_2, r_3$}
\State \textbf{return} $(\bp,q)$ and $r_1, r_2$.
 \EndProcedure
 \end{algorithmic}
 \caption{New Adversarial Constraint ($\nac$)}
 \label{alg:adv0}
\end{algorithm}

\begin{algorithm}[h]
 \begin{algorithmic}
 \Procedure{$\ad$}{$R_1 , R_2,r_1, r_2, \hat\bbx$} 
\If{$\hat\bbx==r_2$}\Comment{$r_1$ and $r_2$ as in Algorithm~\ref{alg:adv0}}
\State $\bbx\gets r_1$.
\State $R_2 \leftarrow \left[\text{bot}(R_2), \text{mid}(R_2)\right].$
\Else
\State $\bbx=r_2$.
\State $R_2 \leftarrow \left[\text{mid}(R_2), \text{top}(R_2)\right].$
\EndIf
\State $R_1\leftarrow \left[\text{mid}(R_1), \text{top}(R_1)\right].$
\State \textbf{return} $\bbx, R_1, R_2$
 \EndProcedure
 \end{algorithmic}
 \caption{Adaptive Adversary ($\ad$)}
 \label{alg:adv}
\end{algorithm}

$\adversary$ finishes by actually outputting the polytope that was consistent with the constraints and the optimal solutions he showed at each day.  This
polytope is defined by constraint matrix $A$ and vector $\bbb$ using the subroutine $\conmatr$ as well as the nonnegativity constraint $\bbx \geq 0$.
\begin{algorithm}
\begin{algorithmic}
\Procedure{$\conmatr$}{$R_1,R_2$}
\State Set $f_1 = \left(\text{top}(R_1)+\text{bot}(R_1)\right)/2$ and $f_2 = \left(\text{top}(R_2)+\text{bot}(R_2)\right)/2$ and $\epsilon>0$ 
\State \[A\gets \left( \begin{array}{ccc}
-1 & 0 & 0 \\
1 & 0 & 0 \\
(f_1-1-\epsilon) & -\epsilon & 1 \\
-(f_2-1)\cdot f_1 & f_1 & 0
\end{array} \right)
\text{ and }
\bbb \gets \left( \begin{array}{c}
0 \\
1 \\
f_1-\epsilon\\
f_1
\end{array} \right).\]
\Return $A$ and $\bbb$
\EndProcedure
 \caption{Matrix consistent with adversary ($\conmatr$)}
 \label{alg:matrix}
\end{algorithmic}
\end{algorithm}

To prove that the procedure given in $\adversary$ does in fact make every learner $\cL$ make a mistake at every day, we need to show that 
\emph{(i)} there exists a simple unknown polytope that is consistent with what the adversary has presented in the previous days.
Furthermore, we need to show that \emph{(ii)} the optimal point returned by the adversary on each day is indeed the optimal point
corresponding to the LP with objective $\bbc$ and unknown constraints subject to the additional constraint added on each day.  


To show \emph{(i)} note that point $r_1\utt= (0,1,\text{mid}(R_1\uttb))$ is always a feasible point for $t \in [N]$ in the polytope given by $A$ and $\bbb$ and the new constraint added each day will allow $r_1\utt$ to remain feasible. 

To show \emph{(ii)} first note that 
the new constraint added is always a binding constraint. So
by Assumption~\ref{assumpt:bind}, it is sufficient to check the intersection of the edges of the polytope output by $\conmatr$ and the newly added hyperplane and return the (feasible) point with the highest objective as the optimal point.
Second, the following equations define the edges of the polytope which are one dimensional subspaces $\bbe^{i,j}$ according to Assumption \ref{assumpt:rank} with $A$ and $\bbb$ being the output of $\conmatr$.
$$
\bbe^{i,j} = \{\bbx\in \R^3 \mid A_{i}\bbx= b_{i}\quad \text{ and } \quad A_j \bbx = b_j \} \qquad i,j \in \{1,2,3 ,4\}, \quad i \neq j,
$$
where $A_i$ is the $i$th row of $A$.  Since the first two constraints define two parallel hyperplanes, we only need to consider 5 edges. Let
$$r_1\utt = \left(0,1,\text{mid}(R_1\uttb )\right),$$ and $$r_2\utt = \left(1,\text{mid}(R_2\uttb),1+\epsilon\cdot \text{mid}(R_2\uttb)\right).$$ We show 
that the new constraint either intersects the edges of the polytope at $r_1\utt$ or $r_2\utt$ or do not intersect with them at all. This will prove that 
the optimal points shown by the adversary each day is consistent with the unknown polytope.

\begin{enumerate}
\item $\bbe^{1,4}=  \left(0,0,f_1\right)\cdot s+\left(0, 1, 0\right)$ that intersects with the new hyperplane at $r_1\utt$. 
\item $\bbe^{2,3} = \left(0, f_2, \epsilon\cdot f_2\right)\cdot s + \left(1, 0, 1\right)$ that intersects with the new hyperplane at $r_2\utt$. 
\item $\bbe^{2,4}=  \left(0,0,1+\epsilon\cdot f_2\right)\cdot s+\left(1, f_2, 0\right)$ that does not intersect the new hyperplane unless $\text{mid}(R_2)=f_2$ (which does not happen).
\item $\bbe^{3,4} = \left(1, f_2-1, 1+\epsilon\cdot f_2-f_1\right)\cdot s+\left(0, 1, f_1\right)$ that does not intersect the new hyperplane unless $\text{mid}(R_2)=f_2$ (which does not happen).
\item $\bbe^{1,3}=  \left(0, -1, f_1\right)\cdot s + \left(0, 1, f_1\right) $ never intersects the hyperplane.
\end{enumerate}
And this concludes the proof.
\end{proof}

We now prove Theorem \ref{thm:impossibility} even for $d>3$.  \\
\textbf{Proof of Theorem \ref{thm:impossibility}.}
We modify the proof of Theorem~\ref{thm:imp-3d} to $d>3$ by adding dummy variables. These dummy variables are denoted by $x_{4:d}$.
Furthermore, we add dummy  constraints $x_i\ge 0$ for all the dummy variables. 
We modify the objective function in the proof of Theorem~\ref{thm:imp-3d} to be $\bbc = (0, 0, 1, -1, \ldots, -1)$. This will cause all the newly added
variables to have no effect on the optimization (they should be set to $0$ in the optimal solution) and, hence, the result from Theorem~\ref{thm:imp-3d}
extends to the case when $d>3$.
\qed
\subsection{Section~\ref{sec:stochastic}}

\textbf{Proof of Lemma~\ref{lem:stoch_case}.}
First, since all of the points $x_1,\ldots,x_t$ are drawn i.i.d. from $\cD$, we observe by symmetry that the event we are interested in is distributed identically to the following event: That when we draw a set of $t$ points $X'=\{x_1,\ldots, x_{t}\}$ i.i.d. from $\cD$ and select an index $i \in \{1,\ldots,t\}$ uniformly at random, $x_i \not\in \mathrm{conv}(X' \setminus \{x_i\})$. In other words
 $$\Pr_{x_1,\ldots,x_t \sim \cD}[x_t\not\in \mathrm{conv}(X)] = \Pr_{x_1,\ldots,x_t \sim \cD, i \sim \{1,\ldots,t\}}[x_i\not\in \mathrm{conv}(X' \setminus \{x_i\})].$$
We analyze the quantity on the right hand side instead, fixing the choices of $x_1,\ldots,x_t$, and analyzing the probability only over the randomness of the choice of index $i$.
For each edge $e\in E_{\cP}$, let $X_e' = X' \cap e$.  
Since each edge lies on a one dimensional subspace, there are at most two extreme points in $x^e_1,x^e_2 \in X_e'$
that lie outside of the convex hull of other points -- i.e. such that $x^e_1 \not\in \mathrm{conv}(X'\setminus\{x^e_1\})$ and $x^e_2 \not\in \mathrm{conv}(X'\setminus\{x^e_2\})$.
We note that when we choose an index $i$ uniformly at random, the probability that we select a point $x \in X'_e$ is exactly $|X'_e|/t$, and conditioned on selecting a point $x \in X'_e$, the probability that $x$ is an extreme point (i.e.  $x \in \{x^e_1,x^e_2\}$) is at most $2/|X'_e|$.
Hence, we can calculate
\begin{align*}
\Pr\left[ x_i \not \in \mathrm{conv}(X'\setminus \{x_i\}) \right] &=
\sum_{e\in E_{\cP}} \Pr[x_i \in X'_e] \cdot \Pr[ x_i \notin
  \mathrm{conv}(X_e'\setminus \{x_i\}) \mid x_i \in X'_e]\\
&\leq \sum_{e\in E_{\cP}} \frac{|X_e'|}{t} \cdot \frac{2}{|X_e'|} = \sum_{e\in E_{\cP}} \frac{2}{t} = \frac{2|E_{\cP}|}{t}.
\end{align*} 
\qed

\textbf{Proof of Theorem~\ref{thm:mistakebdd}.}
First, we show that $\learnhull$ makes a mistake only if the true
optimal point $\bbx^{(t)}$ lies outside of the convex hull
$\cC^{(t-1)}$ formed by the previous observed optimal points $\{\bbx^{(1)}, \ldots, \bbx^{(t-1)}\}$. Suppose that at
round $t$, the algorithm predicts the point $\hat \bbx^{(t)}$ instead
of the optimal point $\bbx^{(t)}$. Since each point in $\cC^{(t-1)}$
is feasible and $\hat \bbx^{(t)}$ is the point with the highest
objective value among the points in
$\{\bbx \in \cC^{(t-1)} \mid \mathbf{p}' \cdot \bbx\leq b'\}$, then it
must be that $\bbx^{(t)}\not\in \cC^{(t-1)}$ because otherwise $\bbc\cdot\bbx^{(t)} > \bbc\cdot\hat\bbx^{(t)}$. By
Lemma~\ref{lem:stoch_case}, we also know that the probability that
$\bbx^{(t)}$ lies outside of $\cC^{(t-1)}$ is no more than $2|E_{\cP}|/t$ in expectation,
which also upper bounds the probability of $\learnhull$ making a
mistake at round $t$. Therefore, the expected number of mistakes made
by $\learnhull$ over $T$ rounds is bounded by the sum of probabilities of making a mistake in each round which is $\sum_{t=1}^T 2|E_{\cP}|/t <
2|E_{\cP}|(\ln(T) + 1)$.

\qed

\textbf{Proof of Theorem~\ref{thm:stoch}.}
The procedure runs $\lceil 18\log(1/\delta)\rceil$ independent instances of the $\learnhull$ each using independently drawn examples. The independent instances are aggregated into a single prediction rule by predicting using the \emph{modal} \rynote{modal?} prediction (if one exists), and otherwise predicting arbitrarily. Hence, the aggregate prediction is correct whenever at least half of the instances of $\learnhull$ are correct.

We show that if each instance of the $\learnhull$ is run for $t = 8|E_{\cP}|$ rounds, then the probability that more than half of the instances of $\learnhull$ make a mistake on a newly drawn constraint at round $T+1$ is at most $\delta$. The result is that with probability at least $1-\delta$
the majority of instances of $\learnhull$ predict the correct optimal point, and hence the aggregate prediction is also correct.

Let $Z_i$ be the random variable that denotes the probability that the $i$th instance of the $\learnhull$ algorithm makes a mistake on a fresh example, after it has been trained \rynote{Trained?  Is this formal?}
for $8|E_{\cP}|$ rounds.
By Theorem~\ref{thm:mistakebdd}, we know $E[Z_i]\leq 1/4$ for all $i$.
Now by Markov's inequality,
$$
\Pr\left[Z_i \geq \frac{3}{4}\right] = \Pr\Big[Z_i \geq 3\cdot E[Z_i]\Big] \leq 1/3,
$$
for all $i$. Hence, the \emph{expected} number of instances that make a mistake is at most $1/3$. Finally, since each instance is trained on independent examples, a Chernoff bound implies that the probability that at least half of the instances of $\learnhull$ make a mistake is bounded by $\delta$.

%

\qed
\section{Circumventing the Lower Bound when $d\leq 2$}
\label{sec:1-2d}
In Theorem~\ref{thm:lower} (in Section~\ref{sec:lowerbound}), 
we proved the necessity of
Assumption~\ref{assumpt:precision} by showing that the dependence on
the precision parameter $N$ in our mistake bound is tight. However, 
Theorem~\ref{thm:lower} requires the dimension $d$ to be at least 3.

We now show that this condition on the dimension is indeed
necessary---even without the finite precision assumption
(Assumption~\ref{assumpt:precision}), we can have (computationally
efficient) algorithms with small mistake bounds when the dimension
$d\leq 2$.

In $d=1$, at most two constraints are sufficient to determine any constraint matrix $A$ because 
the constraint matrix $A$ defines a feasible interval on the real line. So we will guess the value that maximizes the objective subject to the single known constraint.  Once we have made a mistake, we must have learned the true optimal to the underlying problem because our guess was infeasible.  After this single mistake, we either guess the true optimal that we have already seen or if it is not feasible with the new constraint then we guess the point that maximizes the objective subject to the new constraint.  Thus after one mistake, the learner will not make any more mistakes. 

Lemma \ref{lem:collinear} tells us that the line between any three collinear points must give us an edge-space of the underlying polytope.  When $d=2$, the corresponding edge-space is then just one of the original constraints of the underlying polytope.  Since each solution must be on an edge of the underlying polytope each day, we can make at most $3m$ mistakes without seeing the true objective.  Hence, all together, we can make at most $3m+1$ mistakes before we recover all the constraints of the underlying polytope, or all the rows of the constraint matrix $A$, and see the true optimal solution.  

This phenomenon does not continue to hold for $d > 2$ (as we show in our lower bound in Theorem~\ref{thm:lower}).

\section{Missing Proofs from Section~\ref{sec:known_constr}}
\label{sec:missingproof4}
\ifnum\final=0
First we state Theorem~\ref{thm:ell_bits} from~Grotschel et al.~\cite{GLS93} about the running time of the Ellipsoid algorithm.
\begin{theorem}
Let $\cP \subset \R^d$ be a polytope given as the intersection of linear constraints, each specified with $N$ bits of precision.  Given access to a separation oracle which can return, for each candidate solution $p \notin \cP$ a hyperplane with $N$ bits of precision  that separates $p$ from $\cP$, the Ellipsoid algorithm outputs a point $p' \in \cP$ or outputs $\cP$ is empty at most $\poly(d, N)$ iterations.
\label{thm:ell_bits}
\end{theorem}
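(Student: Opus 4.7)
The plan is to prove this classical Ellipsoid complexity bound by combining three ingredients: (i) an initial outer ellipsoid of not-too-large volume containing $\cP$, (ii) a geometric per-iteration volume shrinkage, and (iii) a lower bound on the volume of $\cP$ (or a small perturbation of it) whenever $\cP$ is non-empty. The algorithm will stop and return a feasible point whenever the centroid happens to lie in $\cP$ (tested by the separation oracle); it will declare $\cP$ empty once the current ellipsoid's volume falls below the lower bound from (iii).

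First I would set up the initial ellipsoid. Using the standard fact that a polytope with constraints of bit-length $N$ has all of its vertices of bit-length at most $O(d^2 N)$ (Lemma 6.2.4 of Grötschel et al., already cited in the paper), I conclude that $\cP$ is contained in the origin-centered ball $B(0, R)$ with $R = 2^{\poly(d,N)}$. Take $\cE^{(0)}$ to be this ball; its volume is $R^d \cdot \mathrm{vol}(B(0,1)) = 2^{\poly(d,N)}$. Next, for the per-iteration step, I would recall the standard geometric lemma: the minimum-volume ellipsoid containing the intersection of an ellipsoid with a halfspace through its centroid has volume at most $e^{-1/(2(d+1))}$ times the volume of the original ellipsoid. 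This applies in each round where the oracle returns a separating hyperplane (shifted through the centroid if necessary, which only helps). Thus after $K$ iterations, $\mathrm{vol}(\cE^{(K)}) \le e^{-K/(2(d+1))} \cdot 2^{\poly(d,N)}$.

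The main obstacle, and the step I expect to require the most care, is the lower volume bound for a feasible $\cP$: a general feasible polytope given by inequalities may have empty interior (e.g., if $\cP$ lies in a lower-dimensional affine subspace), in which case its volume is zero and the shrinkage argument cannot certify infeasibility. The standard remedy is to perturb the right-hand sides by a carefully chosen $\epsilon$ of the form $2^{-\poly(d,N)}$, producing $\cP_\epsilon \supseteq \cP$, and to show (a) $\cP_\epsilon$ is non-empty iff $\cP$ is, and (b) if $\cP_\epsilon$ is non-empty then it contains a ball of radius $2^{-\poly(d,N)}$, hence has volume at least $2^{-\poly(d,N)}$. Part (b) uses the vertex bit-complexity bound to argue that any non-empty $\cP_\epsilon$ must contain a full-dimensional simplex of bounded combinatorial complexity; part (a) uses the same bit-length analysis to show the perturbation is too small to change feasibility. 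The separation oracle is used to implement the Ellipsoid on $\cP_\epsilon$ as well, since any hyperplane separating a point $p$ from $\cP$ can be shifted by at most $\epsilon \|p\|$ to separate from $\cP_\epsilon$, and the bit-length of the shifted hyperplane remains $\poly(N, d)$.

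Finally I would combine the three ingredients: setting $K$ so that $e^{-K/(2(d+1))} \cdot 2^{\poly(d,N)} < 2^{-\poly(d,N)}$ gives $K = \poly(d, N)$, after which the running ellipsoid's volume is smaller than any non-empty $\cP_\epsilon$ could be, so the algorithm safely reports infeasibility. If instead the centroid ever lands in $\cP$ during these $K$ rounds, the algorithm halts early with a feasible point. The per-iteration arithmetic can be carried out with rounded numbers of bit-length $\poly(d,N)$ using the standard ``deep-cut/shallow-cut'' rounding analysis, which introduces only a constant factor slowdown in the shrinkage rate and hence does not affect the $\poly(d,N)$ iteration bound.
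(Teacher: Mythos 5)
The paper does not prove this statement at all: Theorem~\ref{thm:ell_bits} is imported verbatim from Gr\"otschel, Lov\'asz, and Schrijver \cite{GLS93} as a black box, so there is no in-paper argument to compare yours against. Your sketch reconstructs the standard textbook proof, and its skeleton is correct: an initial ball of volume $2^{\poly(d,N)}$ containing $\cP$ (via the vertex bit-length bound), the $e^{-1/(2(d+1))}$ per-iteration volume shrinkage from the central-cut update, and a $2^{-\poly(d,N)}$ volume lower bound for a suitable relaxation $\cP_\epsilon$ when $\cP$ is non-empty, which together force termination in $\poly(d,N)$ iterations.

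Two points in your sketch are load-bearing and are left at the level of ``standard remedies,'' and one of them hides a real issue. If $\cP$ is non-empty but not full-dimensional, running the method on $\cP_\epsilon$ will indeed terminate with a point $p' \in \cP_\epsilon$, but $p'$ need not lie in $\cP$, and your halting rule (``halt if the centroid lands in $\cP$'') will essentially never fire for a measure-zero $\cP$. You therefore need the additional rounding step that recovers an exact point of $\cP$ from an approximate point of $\cP_\epsilon$ --- in GLS this is done via simultaneous Diophantine approximation, or in the explicit-constraint LP setting by identifying the tight constraints at $p'$ and solving the resulting linear system exactly. Without that step your algorithm can neither output a point of $\cP$ nor declare emptiness in the degenerate case. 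The second deferred step, carrying out the arithmetic with $\poly(d,N)$-bit rounded numbers while preserving a constant-factor shrinkage rate, is genuinely standard and your one-sentence appeal to the shallow-cut analysis is acceptable for a sketch. With the rounding-back step added, your argument is the same one the cited source gives.
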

We are now ready to bound the number of mistakes that $\learnell$ makes. \\
\ifnum\final=0
\begin{proof}
\else
We first state that the number of updates that the ellipsoid algorithm needs to make before it finds a point in $\cF$ can be bounded by the following theorem from~\cite{GLS93} which we state for completeness.
\begin{theorem}[\cite{GLS93}]
Let $\cF \subset \R^d$ be a polytope given as the intersection of linear constraints, each of which have finite bit precision $N$.  Suppose for each point $\bbv \notin \cF$, a hyperplane with finite bit precision $N$ is given that separates $\bbv$ from $\cF$.  Then the ellipsoid algorithm outputs a point $\bbv \in \cF$ or outputs that $\cP$ is empty in time $\poly(d, N)$.
\label{thm:ell_bits}
\end{theorem}

\begin{proof}[Proof of Theorem~\ref{thm:unknwon-obj}]
\fi
Note that whenever our learning algorithm makes a mistake, we can exhibit a separating hyperplane (given in \eqref{eq:sep_hyp} ), and cause the Ellipsoid algorithm to run for another iteration.  Once we have found a point $W \in \cF$ then predicting via \eqref{eq:obj_guess} will ensure we never make a mistake again.  Hence, the number of mistakes our learning algorithm can commit against an adversary is bounded by the maximum number of iterations for which the Ellipsoid algorithm can be made to run, in the worst case. 

We assume that the observer solution $\bbx\utt$ each day is on a vertex of the polytope $\cP\utt$ which is guaranteed to have coordinates specified with at most $N$ bits of precision (i.e. which can be written as integer multiples of  $1/2^N$).   Theorem \ref{thm:lin_solve} guarantees that the solution $\hat\bbx\utt$ our learning algorithm produces in \eqref{eq:obj_guess} is a vertex solution of $\cP\utt$, so $\hat\bbx\utt$
can also be written with $N$ bits of precision, by Assumption~\ref{assumpt:precision2}.  Thus every constraint in $\cF$ and hence each separating hyperplane can be written with $d\cdot  N$ bits of precision.  By Assumption \ref{assumpt:obj_precision} and Theorem \ref{thm:ell_bits}, we know that the Ellipsoid algorithm will find a point in the feasible region $\cF$ after at most $\poly(n,d,N)$ many iterations, which yields our mistake bound.

\end{proof}

\end{document}